\newtheorem{strategy}{Strategy}    
\newtheorem{upper bound}{Upper bound}
\begin{document}

\title{Towards Sequence Utility Maximization under Utility Occupancy Measure}

\author{Gengsen Huang}
\affiliation{ 
	\institution{Jinan University}
	\city{Guangzhou}
	\state{Guangdong}
	\country{China}
}
\email{hgengsen@gmail.com}

\author{Wensheng Gan}
\authornote{This is the corresponding author}
\affiliation{
	\institution{Jinan University}
	\city{Guangzhou}
	\state{Guangdong}
	\country{China}
}
\email{wsgan001@gmail.com}

\author{Philip S. Yu}
\affiliation{%
	\institution{University of Illinois at Chicago}
	\city{Chicago}
	\country{USA}
}
\email{psyu@uic.edu}


\begin{abstract}
	The discovery of utility-driven patterns is a useful and difficult research topic. It can extract significant and interesting information from specific and varied databases, increasing the value of the services provided. In practice, the measure of utility is often used to demonstrate the importance, profit, or risk of an object or a pattern. In the database, although utility is a flexible criterion for each pattern, it is a more absolute criterion due to the neglect of utility sharing. This leads to the derived patterns only exploring partial and local knowledge from a database. Utility occupancy is a recently proposed model that considers the problem of mining with high utility but low occupancy. However, existing studies are concentrated on itemsets that do not reveal the temporal relationship of object occurrences. Therefore, this paper towards sequence utility maximization. We first define utility occupancy on sequence data and raise the problem of High Utility-Occupancy Sequential Pattern Mining (HUOSPM). Three dimensions, including frequency, utility, and occupancy, are comprehensively evaluated in HUOSPM. An algorithm called Sequence Utility Maximization with Utility occupancy measure (SUMU) is proposed. Furthermore, two data structures for storing related information about a pattern, Utility-Occupancy-List-Chain (UOL-Chain) and Utility-Occupancy-Table (UO-Table) with six associated upper bounds, are designed to improve efficiency. Empirical experiments are carried out to evaluate the novel algorithm's efficiency and effectiveness. The influence of different upper bounds and pruning strategies is analyzed and discussed. The comprehensive results suggest that the work of our algorithm is intelligent and effective.
\end{abstract}

\keywords{Pattern discovery, sequential pattern, utility mining, utility occupancy}

\authorsaddresses{Authors' addresses: G. Huang and W. Gan, College of Cyber Security, Jinan University, Guangzhou, China; email: hgengsen@gmail.com and wsgan001@gmail.com. Philip S. Yu, University of Illinois at Chicago, Chicago, USA; email: psyu@uic.edu
}

\begin{CCSXML}
<ccs2012>
   <concept>
       <concept_id>10002951.10003317</concept_id>
       <concept_desc>Information systems~Data mining</concept_desc>
       <concept_significance>500</concept_significance>
       </concept>
   <concept>
       <concept_id>10010147.10010257</concept_id>
       <concept_desc>Computing methodologies~Machine learning</concept_desc>
       <concept_significance>300</concept_significance>
       </concept>
 </ccs2012>
\end{CCSXML}

\ccsdesc[500]{Information Systems~Data mining}

\ccsdesc[300]{Applied computing~Business intelligence} 


\maketitle

\section{Introduction}

Sequential pattern mining (SPM) \cite{agrawal1995mining,fournier2017survey} is a fundamental and important technique to explore useful knowledge from a wide variety and huge amount of data. The patterns derived by SPM can reveal the temporal relationship of objects or events, solving the dilemmas and limitations of frequent itemset mining (FIM) \cite{agrawal1994fast, han2004mining, geng2006interestingness}. Generally speaking, FIM aims to mine collections containing multiple objects or events, while the goal of SPM is to discover sequences comprising multiple collections. Consideration of the chronological order renders SPM more challenging but also more rewarding. SPM is a discovery process based on user-predefined minimum support (\textit{minsup}) and the frequency of candidate patterns. The research on the frequency evaluation metric has been widely studied and has also been extensively extended to meet the various needs of users, such as closed sequential pattern \cite{wang2007frequent}, top-$k$ sequential pattern \cite{petitjean2016skopus}, and constraint-based sequential pattern \cite{van2018mining,wu2021ntp}. Besides, sequence prediction is an interesting derivative direction that attempts to predict what object or event will happen next.

In recent years, as various applications and scenario cases have been studied in detail, the importance of frequency-oriented SPM has no longer been quite conspicuous. In practical services, various objects are often influenced by disparate implicit factors. Therefore, traditional SPM algorithms are not suitable for complex mining tasks. The utility model adequately takes into account the weight, profit, or risk that an object may actually hold and was first introduced into mining as a framework, namely high-utility pattern mining (HUPM) \cite{gan2019survey, fournier2019survey}. In general, the utility of each object is set according to the preferences selected by the user, or its decision attributes. Moreover, frequency-oriented mining can be seen as a special task of utility-oriented mining, i.e., when the utilities of all objects are set to a constant one. Utility-oriented mining is an emerging research issue that is also quickly applied to transaction data \cite{tseng2010up, tseng2012efficient, liu2012mining, zida2017efim} and sequence data \cite{yin2012uspan, wang2016efficiently, gan2020proum, gan2020fast}. High-utility sequential pattern mining (HUSPM) is highly regarded and also applied to many applications \cite{shie2011mining, zihayat2017mining} due to its outstanding characteristics. Compared to traditional SPM, HUSPM is much more difficult. In a mined database, the utility of each object in the sequence is not uniform. Obviously, this configuration is more relevant to real-world applications. For instance, something that occurs today with high utility will not occur tomorrow. Furthermore, HUSPM has many interesting contents when combined with other theories of knowledge or superior industrial technologies, such as fuzzy theory \cite{gan2021explainable}, average utility model \cite{truong2020ehausm}, the nettree structure \cite{wu2021hanp}, and computational framework processing \cite{srivastava2020large}.

The main problem with HUSPM is that it can only explore local knowledge, ignoring the influence of the global database on the patterns. For example, in a sequence database, it is easy to calculate the utility of a sequential pattern, and the pattern is deemed a quality pattern if its utility is high enough. Thus, irrelevant objects in the sequence records of the database obviously have little effect on the calculation. It means that even if a database contains a large number of irrelevant objects for this pattern, they will not have any influence on the information carried by this pattern. Recently, a measure known as occupancy \cite{zhang2015occupancy} has been proposed to view global database knowledge. For example, for a pattern of length 3, if the lengths of the transaction records containing it are 3, 5, and 6, the occupancy of this pattern is then equal to $\frac{1}{3}$ $\times$ ($\frac{3}{3}$ + $\frac{3}{5}$ + $\frac{3}{6}$) = 0.7. Those records that contain more irrelevant items make the occupancy of the pattern lower. Although it measures the completeness of a pattern in the database, it is a simple complement of the frequency or support. Interesting information such as weight, profit, and risk is still ignored. It suggests that it is wise and helpful to mine patterns that share a high occupancy on utilities. The concept of utility occupancy \cite{shen2016ocean, gan2019huopm} is defined subsequently. Utility occupancy is introduced into transactional quantitative databases, making such technologies capable of mining patterns that are frequent and highly utility-occupied. However, the existing studies have not been adapted to sequence databases, only addressing itemset mining.

In this paper, we toward sequence utility maximization under the utility occupancy measure. We propose a generic framework to discover high utility-occupancy sequential patterns (HUOSPs) in sequence data, thus addressing the lack of existing research. The concept of HUOSP is different from high utility-occupancy itemset (HOUI) and high-utility sequential pattern (HUSP). Since the number of occurrences and utility of each object in the sequence record are different, it causes HUSPM and its extensions to be more challenging, especially in utility calculation or other calculations related to utility. Utility occupancy on sequence data also faces such problems, making mining tasks more complicated to tackle. The major contributions of this article are summarized as follows:

\begin{itemize}
	\item The concept of utility occupancy is applied toward sequence utility maximization, and we present the related concepts and definitions. A novel problem for mining HUOSPs is formalized, which means taking account of the utility occupancy measure. According to our survey, high utility-occupancy mining is an emerging and promising topic in pattern discovery. No prior work has been successful in exploring the sequence data.
	
	\item Two compact data structures, Utility-Occupancy-List-Chain (UOL-Chain) and Utility-Occupancy-Table (UO-Table), are designed for storing the essential information about candidate patterns in the mining process. Six upper bounds for support and utility occupancy are also proposed to make sure that mining results are correct and complete. 
	
	\item A novel algorithm called Sequence Utility Maximization with Utility Occupancy Measure (SUMU) is proposed to discover interesting and interpretable HUOSPs that have high utility-occupancy and frequency. Furthermore, we develop some pruning strategies to boost the efficiency of the proposed algorithm based on the designed upper bounds and data structures.
	
	\item Extensive experiments are conducted to fully analyze the influence of the novel algorithm under both support and utility-occupancy measures. Moreover, four variants of SUMU using different strategies are compared in terms of several aspects.
\end{itemize}

The paper is briefly organized as follows: The previous work related to this paper is reviewed and summarized in Section \ref{sec:relatedwork}. Then, in Section \ref{sec:preliminaries}, the fundamental preliminaries are given, and the problem is formalized. The designed data structures and upper bounds are described with the proposed HOUSPM algorithm in Section \ref{sec:algorithm}. We evaluate the effectiveness and efficiency of the SUMU algorithm and perform experimental analysis on different sequence datasets in Section \ref{sec:experiments}. Finally, in Section \ref{sec:conclusion}, we draw the conclusions and discuss some potential future work.

\section{Related Work}  \label{sec:relatedwork}

\subsection{High-Utility Sequential Pattern Mining}
\label{relWork:HUSPM} 

Considering the order of items in sequence data, sequential pattern mining \cite{agrawal1995mining} was proposed. SPM has been widely applied and served for many applications, including business analysis \cite{fournier2017survey, gan2019surveyp} and medical analysis \cite{gao2020toward}. Later, as the technology of SPM became more sophisticated and computer equipment was upgraded, users focused more on interesting patterns. Constraint-based sequential pattern mining \cite{pei2002mining, garofalakis2002mining} produces more concise and reasonable patterns by imposing a series of constraints. However, the main problem with SPM is that it treats all items that appear in the database as equally important. This means that the mining tasks based on the support measure seem inadequate. By introducing utility evaluation into the sequence data, an important issue was developed, namely high-utility sequential pattern mining (HUSPM). HUSPM can mine those high-yield patterns from the database and the utilities assigned by users for different items whose utilities are determined by their profits or risks. UtilityLevel and UtilitySpan were proposed by Ahmed \textit{et al.} \cite{ahmed2010novel} for mining HUSPs. UtilityLevel is a level-wise-based approach and UtilitySpan utilizes the idea of pattern-growth. In addition, two tree-based approaches \cite{shie2011mining} were proposed to obtain interesting patterns in the mobile commerce environment. These two methods use depth-first and breadth-first strategies, respectively. And then, Yin \textit{et al.} \cite{yin2012uspan} provided a generic framework for HUSPM and designed a fast algorithm called USpan. Related concatenation mechanisms and pruning strategies were designed to quickly calculate the utility of a node and its children in a tree data structure (called a lexicographic q-sequence tree). After two database scans, the information about the sequence is stored in a utility matrix. Besides, an upper bound on utility calculation was designed to quickly discover HUSPs. However, the value of the upper bound is overestimated for some candidate patterns, resulting in the omission of some true HUSPs \cite{gan2020proum}.

A projection-based approach called PHUS was proposed by Lan \textit{et al.} \cite{lan2014applying}. It mainly used a maximum utility measure and an efficient indexing structure to expedite the entire mining process, and also proposed the sequence-utility upper bound (SUUB). Then, HuspExt \cite{alkan2015crom} was proposed, which is based on the upper bound cumulated rest of match (CRoM) to eliminate candidate items and patterns. Subsequently, an algorithm with the utility-chain data structure called HUS-Span was proposed by Wang \textit{et al.} \cite{wang2016efficiently}. There are two tight upper bounds introduced in HUS-Span, and thus HUS-Span can quickly identify HUSPs with the help of pruning strategies. Inspired by the idea of projection, the ProUM algorithm \cite{gan2020proum} introduced the upper bound sequence extension utility (SEU) to eliminate unpromising sequences. The projection mechanism used in the designed utility-array and ProUM can work well in the mining process. Then, an efficient HUSP-ULL algorithm \cite{gan2020fast} was proposed. The designed UL-list can be used to efficiently discover the entire set of HUSPs. In addition, two pruning strategies (LAR and IIP) were introduced to avoid useless pattern extensions. Aside from these HUSPM algorithms, the TKHUS-Span algorithm \cite{wang2016efficiently} with three search methods was developed to identify the top-$k$ HUSPs in a sequence database. IncUSP-Miner+ \cite{wang2018incremental} aims to deal with mining tasks in a dynamic environment.

\subsection{High Utility-Occupancy Pattern Mining}
\label{relWork:HUOPM} 

As research on high-utility pattern mining (HUPM) has continued to intensify, researchers have realized the deficiencies of the measure of utility. HUPM often encounters many dilemmas, such as the rare item problem \cite{weiss2004mining, gan2021beyond}, lack of correlation \cite{fournier2016mining, gan2019correlated}, and neglect of an intrinsic relationship \cite{wu2010re, kim2011efficient}. Hence, there are many measures that have been proposed. Occupancy is a flexible and interesting measure that was first introduced by Zhang \textit{et al.} \cite{zhang2015occupancy}. With the occupancy measure, the completeness of a pattern can be evaluated, and the mined patterns are deemed dominant and frequent. Compared to the support measure, the important downward closure property is not valid for occupancy. This means that the estimated values of the occupancy calculation need to be explored. The DOFRA algorithm \cite{zhang2015occupancy} discussed different data situations and proposed relevant upper bounds. However, it also suffers from some problems with frequent pattern mining.

For utility mining, Shen \textit{et al.} \cite{shen2016ocean} defined the measure of utility occupancy for the first time, and proposed the OCEAN algorithm that is based on the utility-list. It also derived an upper bound to evaluate the likely contribution of a pattern in the HUPM tasks. However, due to the use of inconsistent sorting orders, OCEAN obtains an incomplete result. Besides, the efficiency of OCEAN is not good enough to make good use of the support property and utility occupancy property. \textit{Gan} \textit{et al.} \cite{gan2019huopm} introduced some tight upper bounds based on the properties of support and utility occupancy. An efficient algorithm called HUOPM was also proposed. Two list-based data structures and a frequency-utility tree were designed in HUOPM to store important information about patterns. Chen \textit{et al.} \cite{chen2021discovering} then explored HUOPM on the uncertain data. Three useful factors, including utility contribution, frequency, and probability, were considered, and the UHUOPM algorithm was proposed to obtain all potential HUOPs. To obtain more flexible HUOPs, the HUOPM$^+$ algorithm\cite{chen2021flexible} was devised, which takes into account the minimum and maximum length constraints.

\section{Preliminaries and Problem Statement}
\label{sec:preliminaries}

In this section, we first introduce and define the basic notations and concepts related to utility occupancy mining on sequence data. The problem of high utility-occupancy sequential pattern mining is then formulated.

\subsection{Notations and Concepts}

Given a finite set $I$ = \{$i_{1}$, $i_{2}$, $\cdots$, $i_{m}$\} containing $m$ distinct items, a quantitative itemset $c$  is a non-empty set and can be defined as $c$ = [($i_1$, $q_1$)($i_2$, $q_2$)$\cdots$($i_n$, $q_n$)], where $q_j$ is the quality value for $i_j$. Each item and its associated quality (internal utility) together comprise the elements of the quantitative itemset $c$. The items in the quantitative itemset $c$ is a subset of $I$. An itemset $w$ is a non-empty set with no quality information for $c$, which is called that $w$ matches $c$, and is denoted as $w$ $\sim$ $c$. To simplify the description of some definitions in this paper, we assume that all items in a quantitative itemset are sorted alphabetically. A quantitative sequence is denoted as $s$ and defined as $s$ = $<$$c_1$, $c_1$, $\cdots$, $c_l$$>$. $s$ is an ordered list containing one or more quantitative itemsets, and the order in which the quantitative itemsets appear can represent the chronological relationship of realistic applications. $v$ = $<$$w_1$, $w_1$, $\cdots$, $w_l$$>$ is used as $s$ without quantity information that is called that $v$ matches $s$ and is denoted as $v$ $\sim$ $s$. For the sake of illustration, quantitative itemset and quantitative sequence can also be termed as $q$-itemset and $q$-sequence. Regarding a quantitative sequence database $\mathcal{D}$, it is a collection of triples $<$\textit{SID}, \textit{qs}, \textit{SU}$>$, where \textit{qs} is a $q$-sequence, \textit{SID} is the unique identifier of \textit{qs}, and \textit{SU} is the total utility of \textit{qs}. Furthermore, each item $i$ such that $i$ $\in$ $\mathcal{D}$ has its own profit value (called external utility), and can be denoted as $p$($i$).

\begin{table}[h]
	\centering
	\caption{Quantitative sequence database}
	\label{table1}
	\begin{tabular}{|c|c|c|}  
		\hline 
		\textbf{SID} & \textbf{Quantitative sequence} & \textbf{SU} \\
		\hline  
		\(s_{1}\) & $<$[(\textit{b}, 2)(\textit{d}, 1)], [(\textit{g}, 1)], [(\textit{f}, 1)]$>$ & 11 \\ 
		\hline
		\(s_{2}\) & $<$[(\textit{d}, 1)], [(\textit{g}, 1)]$>$  & 2 \\  
		\hline  
		\(s_{3}\) & $<$[(\textit{a}, 1)(\textit{b}, 1)], [(\textit{c}, 1)], [(\textit{c}, 2)], [(\textit{d}, 1)]$>$ & 12 \\
		\hline  
		\(s_{4}\) & $<$[(\textit{a}, 2)(\textit{b}, 1)], [(\textit{c}, 1)], [(\textit{e}, 1)]$>$ & 13 \\
		\hline
		\(s_{5}\) & $<$[(\textit{d}, 3)], [(\textit{b}, 1)], [(\textit{a}, 1)], [(\textit{c}, 1)], [(\textit{e}, 1)]$>$ & 13 \\
		\hline
	\end{tabular}
\end{table}

\begin{table}[h]
	\caption{External utility table}
	\label{table2}
	\centering
	\begin{tabular}{|c|c|c|c|c|c|c|c|}
		\hline
		\textbf{Item}	    & \textit{a}	& \textit{b}	& \textit{c}	& \textit{d}	& \textit{e}	& \textit{f}  & \textit{g} \\ \hline 
		\textbf{\textit{Unit utility}}	& 3 & 2 & 2 & 1 & 3 & 5 & 1 \\ \hline
	\end{tabular}
\end{table}

The example $q$-sequence database and external utility table that will be used in the following are shown in Tables \ref{table1} and \ref{table2}. We can see that this database has five $q$-sequences and seven different items. [($b$, 2)($d$, 1)] is the first $q$-itemset in $q$-sequence $s_1$, containing two items, $b$ and $d$. According to Table \ref{table2}, the external utility of items $b$ and $d$ are 2 and 1, respectively. In addition, $<$[$b$$d$]$>$ matches $<$[($b$, 2) ($d$, 1)]$>$.

\begin{definition}
	\rm For an item $i$ in a $q$-itemset $c$, its utility can be denoted as $u$($i$, $c$) and is defined as $u$($i$, $c$) = $q$($i$, $c$) $\times$ $p$($i$, $c$) where $q$($i$, $c$) is the internal utility of $i$ in $c$ and $p$($i$, $c$) is the external utility of $i$. We use $u$($c$) to denote the sum of utilities of all items in $c$, and it can be defined as $u$($c$) = $\sum\limits_{i \in c}u(i, c)$. As for a $q$-sequence $s$, its utility can be denoted as $u$($s$) and is defined as $u$($s$) = $\sum\limits_{c \in s}u(c)$. Moreover, given a $q$-sequence database $\mathcal{D}$, its utility can be denoted as $u$($\mathcal{D}$) and is defined as $u$($\mathcal{D}$) = $\sum\limits_{s \in \mathcal{D}}u(s)$.
\end{definition}

For example, the utility of the item $b$ is equal to 4, because $u$($b$, $s_1$) = 2 $\times$ 2 = 4; the utilities of three $q$-itemsets in $s_1$ are 5, 1, and 5, respectively. Thus, the \textit{SU} of $s_1$ can be calculated as $u$($s_1$) = 5 + 1 + 5 = 11; the total utility of this example database $\mathcal{D}$ is calculated as $u$($\mathcal{D}$) = $\sum_{s_i \in \mathcal{D}}$ $u$($s_i$) = 11 + 2 + 12 + 13 + 13 = 51.

\begin{definition}
	\rm Given two itemsets $w$ and $w^\prime$, if all the items of $w$ appear in $w^\prime$, we say that $w^\prime$ contains $w$, and is denoted as $w$ $\subseteq$ $w^\prime$. Similarly, for two $q$-itemset $c$ and $c^\prime$, if all the items of $c$ appear in $c^\prime$ and have the same quality, we say that $c^\prime$ contains $c$, which is denoted as $c$ $\subseteq$ $c^\prime$.
\end{definition}

For instance, the itemset [$c$$d$$e$] contains the itemset [$c$$e$]. And the $q$-itemset [($c$, 4)($e$, 2)] is contained in [($c$, 4)($d$, 3)($e$, 2)], but not in [($c$, 3)($e$, 3)]. Because the quality of $c$ in these two $q$-itemsets [($c$, 3)($e$, 3)] and [($c$, 4)($d$, 3)($e$, 2)] is different.

\begin{definition}
	\rm Given two sequences $v$ = $<$$w_1$, $w_2$, $\cdots$, $w_l$$>$ and $v^\prime$ = $<$$w^\prime_1$, $w^\prime_2$, $\cdots$, $w^\prime_l$$>$, if there exists an integer list (1 $\le$ $k_1$ $\le$ $k_2$ $\le$ $\cdots$ $l$) satisfies that $w_j$ $\subseteq$ $w^\prime_{k_j}$, 1 $\le$ $j$ $\le$ $l$, we say that $v^\prime$ contains $v$, and is denoted as $v$ $\subseteq$ $v^\prime$. For two $q$-sequences $s$ = $<$$c_1$, $c_1$, $\cdots$, $c_l$$>$ and $s^\prime$ = $<$$c^\prime_1$, $c^\prime_1$, $\cdots$, $c^\prime_l$$>$, if these two $q$-sequences need to satisfy the containment relationship, then there exists an integer list (1 $\le$ $k_1$ $\le$ $k_2$ $\le$ $\cdots$ $l$) satisfies that $c_j$ $\subseteq$ $c^\prime_{k_j}$, 1 $\le$ $j$ $\le$ $l$, which is denoted as $s$ $\subseteq$ $s^\prime$. In this paper, if a sequence $t$ matches a $q$-sequence $s_k$ and also satisfies $s_k$ $\subseteq$ $s$, then it can also be denoted as $t$ $\subseteq$ $s$ instead of $t$ $\sim$ $s_k$ $\land$ $s_k$ $\subseteq$ $s$.
\end{definition}

For example, the $q$-sequence $s_1$ contains $<$[($b$, 2)($d$, 1)]$>$ and $<$[($g$, 1)], [($f$, 1)]$>$, while $<$[($b$, 2)($d$, 2)]$>$ and $<$[($g$, 1)($f$, 1)]$>$ are not contained in $s_1$. 

\begin{definition}
	\rm For a sequences $t$, it has multiple matches in a $q$-sequence $s$. We use $u$($t$, $s$) to denote the actual utility of $s$ and it is defined as $u$($t$, $s$) = \textit{max}\{$u$($s^\prime$) $\vert$ $t$ $\sim$ $s^\prime$ $\land$ $s^\prime$ $\subseteq$ $s$\}. Additionally, the utility of $t$ in the $q$-sequence database $\mathcal{D}$ can be denoted as $u$($t$) and is defined as $u$($t$) = \{$\sum\limits_{s \in \mathcal{D}}u(t, s) \vert t \subseteq s$\}. In addition, its support can be denoted as \textit{sup}($t$) and is defined as \textit{sup}($t$) = $\vert$ $t$ $\subseteq$ $s$ $\land$ $s$ $\in$ $\mathcal{D}$ $\vert$, that is, the number of $q$-sequences of $\mathcal{D}$ matching $t$.
\end{definition}

For example, the sequence $t$ = $<$[$a$$b$], [$c$]$>$ has two matches in the $q$-sequence $s_3$, and so its utility can be calculated as $u$($<$[$a$$b$], [$c$]$>$) = \textit{max}\{$u$($<$[($a$, 1)($b$, 1)], [($c$, 1)]$>$), $u$($<$[($a$, 1)($b$, 1)], [($c$, 2)]$>$)\} = \textit{max}\{7, 9\} = 9. And $t$ has a support of 2 because $s_3$ and $s_4$ both have instances where $t$ matches.

In this paper, the concept of utility occupancy \cite{gan2019huopm} is incorporated into sequence data. Utility occupancy is a flexible measure that can be used to identify patterns with a higher contribution in sequences. Since there is no previous work on this topic, we are the first to define the relevant concepts.

\begin{definition}
	\rm In a $q$-sequence $s$, the utility occupancy of a sequence $t$, denoted as \textit{uo}($t$, $s$), is defined as $\textit{uo}(t, s)$ = $\frac{u(t, s)}{u(s)}$. Note that $t$ may have more than one match at $s$. Then the utility occupancy of $t$ at position $p$ in $s$ can be denoted as \textit{uo}($t$, $s$, $p$) and is defined as follows.
	$$
	\begin{aligned}
	\textit{uo}(t, s, p) = \frac{\textit{max}\{u(t, s^\prime) \vert t \sim s^\prime \land s^\prime \subseteq <s_{1}, \cdots, s_{p}>\}}{u(s)}.
	\end{aligned}
	$$	
	The total utility occupancy of $t$ in a $q$-sequence database $\mathcal{D}$, denoted as \textit{uo}($t$), is defined as follows.	
	$$
	\begin{aligned}
	\textit{uo}(t) = \frac{\sum\limits_{t \subseteq s \land s \in \mathcal{D}}\textit{uo}(t, s)}{\textit{sup}(t)}.
	\end{aligned}
	$$
\end{definition}

For example, the utility occupancy of the sequence $<$[$a$], [$c$]$>$ in $s_3$, $s_4$, and $s_5$ are \textit{uo}($<$[$a$], [$c$]$>$, $s_3$) = \textit{max}(\{5, 7\}) / 12 = 0.583, \textit{uo}($<$[$a$], [$c$]$>$, $s_4$) =  8 /13 = 0.615, and \textit{uo}($<$[$a$], [$c$]$>$, $s_5$) = 5 / 13 = 0.385, respectively. Thus, the total utility occupancy of the sequence $<$[$a$], [$c$]$>$ in the entire $\mathcal{D}$ is equal to \textit{uo}($<$[$a$], [$c$]$>$) = (0.583 + 0.615 + 0.385) / 3 = 0.528.

\begin{definition}
	\rm In a $q$-sequence $s$ with $l$ $q$-itemsets, the remaining utility occupancy of a sequence $t$ at position $p$ can be denoted as \textit{ruo}($t$, $s$, $p$), and is defined as follows.
	$$
	\begin{aligned}
	\textit{ruo}(t, s, p) = \frac{u(<s_{p+1}, \cdots, s_{l}>, s)}{u(s)}.
	\end{aligned}
	$$
\end{definition}

\begin{definition}
	\rm Considering two thresholds, including a minimum support threshold \textit{minsup} (0 $\textless$ \textit{minsup} $\le$ 1) and a minimum utility occupancy threshold \textit{minuo} (0 $\textless$ \textit{minuo} $\le$ 1), a sequential pattern $t$ with high support and high utility occupancy in a $q$-sequence database $\mathcal{D}$ is called a HUOSP. Here, it satisfies \textit{sup}($t$) $\ge$ \textit{minsup} and \textit{uo}($t$) $\ge$ \textit{minuo}.
\end{definition}

For example, the remaining utility occupancy of the sequence $<$[$a$], [$c$]$>$ in $s_3$ at position 2 is equal to \textit{ruo}($<$[$a$], [$c$]$>$, $s_3$, 2) = (4 + 1) / 12 = 0.417. And the remaining utility occupancy of the sequence $<$[$a$]$>$ in $s_4$ at position 1 is equal to \textit{ruo}($<$[$a$]$>$, $s_4$, 1) = (2 + 2 + 3) / 13 = 0.538. Under the setting of \textit{minsup} to 2 and \textit{minuo} to 0.4, all found HUOSPs are shown in Table \ref{table_huosp}.

\begin{table}[h]
	\centering
	\caption{All found HUOSPs in the example database $\mathcal{D}$}
	\label{table_huosp}
	\begin{tabular}{|c|c|c|c|}  
		\hline 
		\textbf{ID} & \textbf{HUOSP} & \textbf{Support} & \textbf{Utility occupancy}\\
		\hline  
		\(p_{1}\) & $<$[$a$$b$]$>$ & 2 & 0.516 \\ 
		\hline
		\(p_{2}\) & $<$[$a$$b$], [$c$]$>$ & 2 & 0.76 \\ 
		\hline
		\(p_{3}\) & $<$[$a$], [$c$]$>$ & 3 & 0.528 \\ 
		\hline
		\(p_{4}\) & $<$[$a$], [$c$], [$e$]$>$ & 2 & 0.731 \\ 
		\hline
		\(p_{5}\) & $<$[$a$], [$e$]$>$ & 2 & 0.577 \\ 
		\hline
		\(p_{6}\) & $<$[$b$], [$c$], [$e$]$>$ & 2 & 0.538 \\ 
		\hline
		\(p_{7}\) & $<$[$d$], [$g$]$>$ & 2 & 0.59 \\ 
		\hline
	\end{tabular}
\end{table}

\subsection{Problem Statement}

After the above definitions and concepts are given, we formulate the problem of mining HUOSPs as follows. Given a quantitative sequence database $\mathcal{D}$, a utility table with external utilities for each item, and two thresholds \textit{minsup} (0 $\textless$ \textit{minsup} $\le$ 1) and \textit{minuo} (0 $\textless$ \textit{minuo} $\le$ 1), the goal of high utility-occupancy sequential pattern mining is to discover all HUOSPs that frequency and utility occupancy are greater than \textit{minsup} and \textit{minuo}, respectively.

\section{Proposed SUMU Algorithm}
\label{sec:algorithm}

In this section, we present an algorithm called Sequence Utility Maximization with Utility Occupancy Measure (SUMU) to discover interesting HUOSPs. Some measures of support and utility occupancy are used to reduce unnecessary operations and prune the pattern search space. Besides, we design two data structures that can record some essential information about a candidate pattern. They are called Utility-Occupancy-List-Chain (UOL-Chain) and Utility-Occupancy-Table (UO-Table). Definitions about the proposed SUMU algorithm are given below.

\begin{definition}[$I$-Extension and $S$-Extension]
	\rm Extension \cite{han2001prefixspan, yin2012uspan, wang2016efficiently} is an operation that is often used to extend patterns in the pattern-growth-based algorithms. A pattern can be extended to derive a longer sup-pattern. There are two extension operations: one is called $I$-Extension and the other is called $S$-Extension. For a sequence $t$, through the $I$-Extension, a new item $i$ can be added to the last itemset of $t$. This extension operation about $t$ and $i$ can be denoted as $<$$t$ $\oplus$ $i$$>$. As for $S$-Extension, a new item $i$ is added to the last of $t$ as a new itemset. This extension operation about $t$ and $i$ can be denoted as $<$$t$ $\otimes$ $i$$>$.
\end{definition}

\subsection{UOL-Chain and UO-Table}

In this section, we design two compact data structures, including UOL-Chain and UO-Table. UOL-Chain is an extension of Utility-chain \cite{wang2016efficiently} to be able to deal with the problem of utility occupancy. As for UO-Table, it is a summary of important information from UOL-Chain. Based on these two data structures, not only the essential information about the patterns during the mining process is preserved, but also the identification of HUOSPs can be done quickly.

\begin{figure}[h]
	\centering
	\includegraphics[trim=0 0 0 0,clip,scale=0.3]{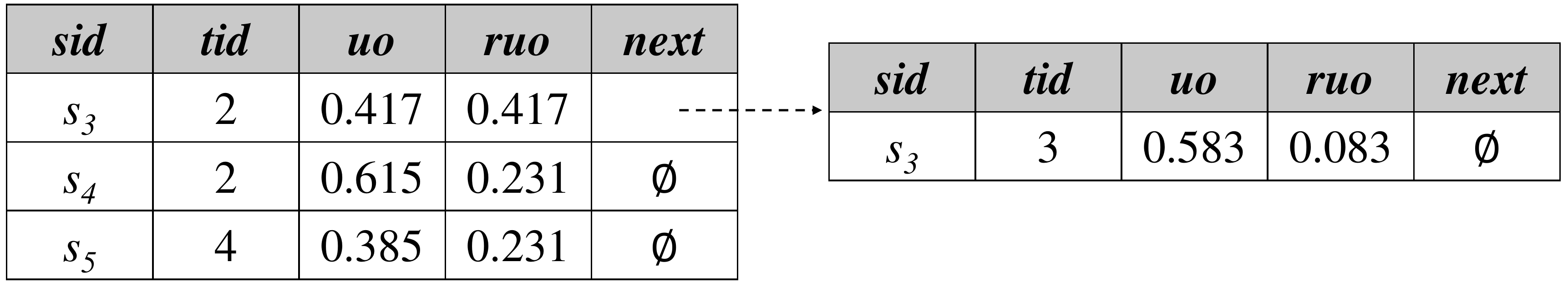}
	\caption{The {UOL-Chain} of the sequence $<$[$a$], [$c$]$>$}
	\label{UOL-Chain}
\end{figure}

The UOL-Chain of a sequence $t$ in the $q$-sequence $s$ has many indispensable elements, and Fig. \ref{UOL-Chain} shows the UOL-Chain of the sequence $<$[$a$], [$c$]$>$. The information carried by each element for $t$ in $s$ is as follows.

\begin{itemize}
	\item \textit{sid} is the unique identifier of the $q$-sequence $s$. 
	\item \textit{tid} is the $i$-th extension position.
	\item \textit{uo} is the utility occupancy of $t$ in $s$ at extension position.
	\item \textit{ruo} is the remaining utility occupancy of $t$ in $s$ at extension position.
	\item \textit{next} is the pointer that points to the next element (the ($i$+1)-th element).
\end{itemize}

\begin{figure}[h]
	\centering
	\includegraphics[trim=0 0 0 0,clip,scale=0.35]{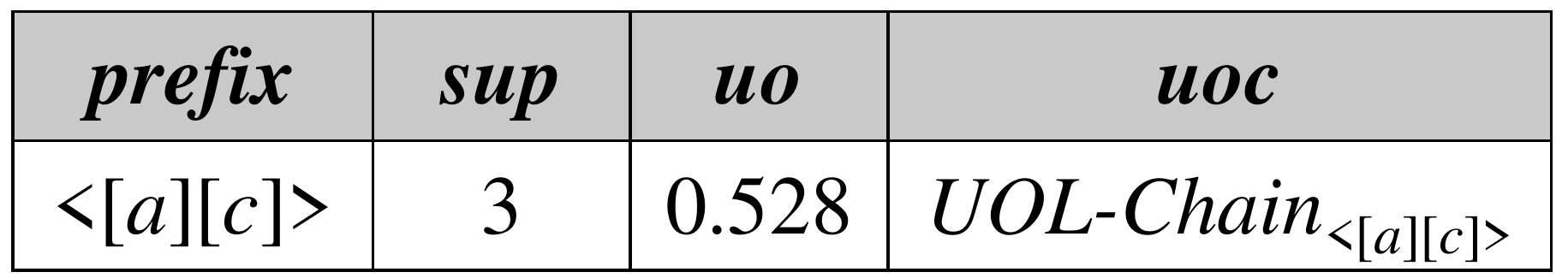}
	\caption{The {UO-Table} of the sequence $<$[$a$], [$c$]$>$}
	\label{UO-Table}
\end{figure}

Moreover, the UO-Table of the sequence $<$[$a$], [$c$]$>$ is shown in Fig. \ref{UO-Table}. It is clear that the support and utility occupancy of the sequence $<$[$a$], [$c$]$>$ are 3 and 0.528, respectively. Each element of a sequence carries the following information:

\begin{itemize}
	\item \textit{prefix} is the sequence. 
	\item \textit{sup} is the support of \textit{prefix} in the $q$-sequence database.
	\item \textit{uo} is the utility occupancy of \textit{prefix} in the $q$-sequence database.
	\item \textit{uoc} is the UOL-Chain of \textit{prefix} in the $q$-sequence database.
\end{itemize}

\subsection{Upper Bound on Utility Occupancy and Support}

In a utility-driven mining task, there are many upper bounds, including sequence-weighted utilization (\textit{SWU}) \cite{yin2012uspan}, sequence-utility upper bound (\textit{SUUB}) \cite{lan2014applying}, prefix extension utility (\textit{PEU}) \cite{wang2016efficiently}, etc., that are designed to terminate extension operations on unpromising candidate patterns early. Although some previous upper bounds are effective and efficient, they only work for HUSPM and are not suitable for mining HUOSPs. Such upper bounds need to be modified and improved so that they can be adapted to HUOSPM. However, due to the non-effectiveness of the \textit{downward} \textit{closure} property on utility occupancy, this makes designing powerful technologies an intractable challenge. With the guarantee of correctness and completeness, we introduce four upper bounds regarding utility occupancy on sequence data. In addition, contrary to the HUSPM, the support metric also needs to be evaluated in the HUOSPM. Therefore, with the aid of the anti-monotonicity of support, we introduce two upper bounds for support. Related theorems and proofs are provided.

\begin{upper bound}[Prefix Extension Utility Occupancy, \textit{PEUO}]
	\rm \textit{PEUO} is the first upper bound we propose, which is extended from \textit{PEU} \cite{wang2016efficiently}. For a sequence $t$, the \textit{PEUO} of it at a position $p$ of a $q$-sequence $s$, denoted as \textit{PEUO}($t$, $s$, $p$), is defined as follows:
	$$ \textit{PEUO}(t, s, p) =  \left\{
	\begin{aligned}
	\textit{uo}(t, s, p) + \textit{ruo}(t, s, p) &  & \textit{ruo}(t, s, p) \textgreater 0 \\
	0 & & otherwise
	\end{aligned}
	\right.
	$$
	We use $P$ to present a set containing all positions that a sequence $t$ matches. In a $q$-sequence $s$, the \textit{PEUO} of the sequence $t$, denoted as \textit{PEUO}($t$, $s$), is defined as $\textit{PEUO}(t, s)$ = $\textit{max}\{\textit{PEUO}(t, s, p) \;|\; p \in P\}$. Based on the above definitions, given a predefined \textit{minsup}, the total \textit{PEUO} of a sequence $t$ in a $q$-sequence database $\mathcal{D}$, denoted as \textit{PEUO}($t$), is defined as follows:
	$$
	\begin{aligned}
	\textit{PEUO}(t) = \frac{\sum\limits_{t \subseteq s \land s \in \mathcal{D}} \textit{PEUO}(t, s)}{\textit{minsup}}. 
	\end{aligned}
	$$
	
\end{upper bound}

\begin{theorem}
	\label{peuo:theorem}
	\rm Given two sequences $t$ and $t^\prime$ such that $t$ $\in$ $\mathcal{D}$ and $t^\prime$ $\in$ $\mathcal{D}$, and $t$ is a prefix of $t^\prime$. If \textit{sup}($t$) $\ge$ \textit{minsup} and \textit{sup}($t^\prime$) $\ge$ \textit{minsup}, it holds that:
	$$
	\begin{aligned}
	\textit{uo}(t^\prime) \le \textit{PEUO}(t) ,\; \textit{PEUO}(t^\prime) \le \textit{PEUO}(t). 
	\end{aligned}
	$$
\end{theorem}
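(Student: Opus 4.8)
The plan is to reduce the two global inequalities to per-$q$-sequence statements and then aggregate. Since $t$ is a prefix of $t'$, every match of $t'$ in a $q$-sequence $s$ embeds a match of its prefix $t$; hence $t' \subseteq s$ forces $t \subseteq s$, so $\{s \in \mathcal{D} : t' \subseteq s\} \subseteq \{s \in \mathcal{D} : t \subseteq s\}$. The core of the argument is the following pair of single-sequence bounds, which I would establish for every $s$ with $t' \subseteq s$:
\[
\textit{uo}(t', s) \le \textit{PEUO}(t, s), \qquad \textit{PEUO}(t', s) \le \textit{PEUO}(t, s).
\]

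To prove the first single-sequence bound I would fix the optimal match $s^\ast$ of $t'$ in $s$ realizing $u(t', s)$, say ending at itemset position $p'$. Because $t$ is a prefix of $t'$, this match splits into a sub-match of $t$ ending at some position $p \le p'$ and an extension part made of the items added by the $I$- and $S$-extensions leading from $t$ to $t'$. The sub-match utility is at most $\max\{u(t, s'') : t \sim s'' \land s'' \subseteq \langle s_{1}, \ldots, s_{p}\rangle\} = \textit{uo}(t, s, p)\cdot u(s)$, while every extension item lies beyond the prefix's matching position and is therefore counted in $\textit{ruo}(t, s, p)\cdot u(s)$. Dividing by $u(s)$ yields $\textit{uo}(t', s) \le \textit{uo}(t, s, p) + \textit{ruo}(t, s, p) = \textit{PEUO}(t, s, p) \le \textit{PEUO}(t, s)$, the last step invoking the maximum over the matching-position set $P$. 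The second single-sequence bound is analogous: for the position $p'$ attaining $\textit{PEUO}(t', s)$, the quantity $\textit{uo}(t', s, p') + \textit{ruo}(t', s, p')$ equals $[u(t)\text{-part} + \text{extension} + \text{tail beyond } p']/u(s)$, and since the extension items together with the tail beyond $p'$ are all drawn from items lying past the prefix position $p$, their combined utility is dominated by $\textit{ruo}(t, s, p)\cdot u(s)$; adding the prefix contribution recovers $\textit{PEUO}(t, s, p)$.

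With the per-sequence bounds in hand, the aggregation is routine. For the first claim,
\[
\textit{uo}(t') = \frac{\sum_{t' \subseteq s} \textit{uo}(t', s)}{\textit{sup}(t')} \le \frac{\sum_{t' \subseteq s} \textit{PEUO}(t, s)}{\textit{sup}(t')} \le \frac{\sum_{t \subseteq s} \textit{PEUO}(t, s)}{\textit{minsup}} = \textit{PEUO}(t),
\]
where the middle step enlarges the index set using $\{s : t' \subseteq s\} \subseteq \{s : t \subseteq s\}$ together with $\textit{PEUO}(t, s) \ge 0$, and at the same time replaces $\textit{sup}(t')$ by the no-larger $\textit{minsup}$ (legitimate because $\textit{sup}(t') \ge \textit{minsup}$ makes both changes increase the ratio). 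The second claim follows the same template, starting from $\textit{PEUO}(t') = \big(\sum_{t' \subseteq s} \textit{PEUO}(t', s)\big)/\textit{minsup}$ and applying $\textit{PEUO}(t', s) \le \textit{PEUO}(t, s)$ before enlarging the index set.

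The step I expect to be delicate is the single-sequence accounting, specifically the $I$-extension case in which an added item sits inside the last itemset of the prefix rather than in a later itemset. I would be careful to read the remaining term $\textit{ruo}(t, s, p)$ as collecting \emph{every} item still able to extend the prefix---both the as-yet-unused items of the prefix's final itemset (available to $I$-extension) and all items of the subsequent itemsets (available to $S$-extension)---so that no extension utility escapes the bound. Verifying that this remaining quantity genuinely dominates the extension utility for each admissible extension, and that passing to the maximum over $p \in P$ preserves the inequality, is the crux on which both global bounds rest.
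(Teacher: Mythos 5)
Your proposal is correct and takes essentially the same route as the paper's proof: you decompose the optimal match of $t^\prime$ into the prefix match of $t$ at a position $p$ plus extension items whose utility is absorbed into $\textit{ruo}(t,s,p)$, take the maximum over matching positions to get the per-sequence bound by $\textit{PEUO}(t,s)$, and then aggregate exactly as the paper does, enlarging the index set from $\{s : t^\prime \subseteq s\}$ to $\{s : t \subseteq s\}$ via non-negativity and replacing $\textit{sup}(t^\prime)$ by the no-larger \textit{minsup}. Your explicit flagging of the $I$-extension case, where the added item lies in the same itemset as the prefix's last matched position, identifies a subtlety the paper's chain of inequalities glosses over (its bound $\sum_{p < j \le p^\prime}\textit{uo}(i_j,s)$ is vacuous when $p^\prime = p$ under an itemset-level reading of positions), so your item-level reading of $\textit{ruo}$ is precisely what is needed to make both your argument and the paper's airtight.
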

\begin{proof}
	\rm In a $q$-sequence $s$, $p$ denotes the position that $t$ matches, and $p^\prime$ denotes the position that $t^\prime$ matches. For any $p$ and $p^\prime$, which satisfy $p$ less than $p^\prime$. According to the above-mentioned definitions, we can learn that:
	$$
	\begin{aligned}
	\textit{uo}(t^\prime) &= \frac{\sum\limits_{t^\prime \subseteq s \land s \in \mathcal{D}}\textit{uo}(t^\prime, s)}{\textit{sup}(t^\prime)} \\
	&= \frac{\sum\limits_{t \subset t^\prime \land t^\prime \subseteq s \land s \in \mathcal{D}}\textit{max}\{\textit{uo}(t, s, p) + \textit{uo}(i_{p^\prime}, s) \}}{\textit{sup}(t^\prime)} \\
	&\le  \frac{\sum\limits_{t \subset t^\prime \land t^\prime \subseteq s \land s \in \mathcal{D}}\textit{max}\{\textit{uo}(t, s, p) + \sum\limits_{p \textless j \le p^\prime}\textit{uo}(i_j, s)\}}{\textit{sup}(t^\prime)}\\
	&\le  \frac{\sum\limits_{t \subseteq s \land s \in \mathcal{D}}\textit{max}\{\textit{uo}(t, s, p) + \textit{ruo}(t, s, p)\}}{\textit{sup}(t^\prime)}\\
	&= \frac{\sum\limits_{t \subseteq s \land s \in \mathcal{D}}\textit{PEUO}(t, s)}{\textit{sup}(t^\prime)} \\
	&\le \frac{\sum\limits_{t \subseteq s \land s \in \mathcal{D}}\textit{PEUO}(t, s)}{\textit{minsup}} \\
&	= \textit{PEUO}(t).
	\end{aligned}
	$$
	For proofing \textit{PEUO}($t^\prime$) $\le$ \textit{PEUO}($t$), we can have that:
	$$
	\begin{aligned}
	\textit{PEUO}(t^\prime) &= \frac{\sum\limits_{t^\prime \subseteq s \land s \in \mathcal{D}}\textit{PEUO}(t^\prime, s)}{\textit{minsup}} \\
	&= \frac{\sum\limits_{t \subset t^\prime \land t^\prime \subseteq s \land s \in \mathcal{D}}\textit{max}\{\textit{uo}(t, s, p) + \textit{uo}(i_{p^\prime}, s) + \textit{ruo}(t, s, p^\prime)\}}{\textit{minsup}}\\
	&\le \frac{\sum\limits_{t \subset t^\prime \land t^\prime \subseteq s \land s \in \mathcal{D}}\textit{max}\{\textit{uo}(t, s, p) + \sum\limits_{p \textless j \le p^\prime}\textit{uo}(i_j, s) + \textit{ruo}(t, s, p^\prime)\}}{\textit{minsup}}\\
	&\le \frac{\sum\limits_{t \subseteq s \land s \in \mathcal{D}}\textit{max}\{\textit{uo}(t, s, p) + \textit{ruo}(t, s, p)\}}{\textit{minsup}} \\
	&= \frac{\sum\limits_{t \subseteq s \land s \in \mathcal{D}}\textit{PEUO}(t, s)}{\textit{minsup}}  \\
	&= \textit{PEUO}(t).
	\end{aligned}
	$$
\end{proof}

\begin{upper bound}[Reduced Sequence Utility Occupancy, \textit{RSUO}]
	\rm \textit{RSUO} is an upper bound designed for width pruning in pattern growth. For a sequence $t$, let $l$ be a sequence that is able to generate $t$ by performing an extension operation. In a $q$-sequence $s$, the \textit{RSUO} of the sequence $t$ can be denoted as \textit{RSUO}($t$, $s$) and is defined as follows:
	$$ \textit{RSUO}(t, s) =  \left\{
	\begin{aligned}
	\textit{PEUO}(l, s)  &  & l \subseteq s \land t \subseteq s\\
	0 & & otherwise
	\end{aligned}
	\right.
	$$
	Subsequently, given a predefined \textit{minsup}, the \textit{RSUO} of a sequence $t$ in a $q$-sequence database $\mathcal{D}$ can be denoted as \textit{RSUO}($t$) and defined as follows:
	$$
	\begin{aligned}
	\textit{RSUO}(t) = \frac{\sum\limits_{s \in \mathcal{D}} \textit{RSUO}(t, s)}{\textit{minsup}}. 
	\end{aligned}
	$$
\end{upper bound}

\begin{theorem}
	\label{rsuo:theorem}
	\rm Given two sequences $t$ and $t^\prime$ such that $t$ $\in$ $\mathcal{D}$ and $t^\prime$ $\in$ $\mathcal{D}$, and $t$ is a prefix of $t^\prime$, if \textit{sup}($t$) $\ge$ \textit{minsup} and \textit{sup}($t^\prime$) $\ge$ \textit{minsup}, it holds that: $	\textit{uo}(t^\prime) \le \textit{RSUO}(t)$; $\textit{RSUO}(t^\prime) \le \textit{RSUO}(t)$.
\end{theorem}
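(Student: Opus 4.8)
The plan is to reduce both inequalities to the per-$q$-sequence estimates that the proof of Theorem~\ref{peuo:theorem} already establishes, combined with a simple domain-inclusion argument for the outer summation. First I would isolate two facts that the previous proof actually produces sequence by sequence: whenever a sequence $a$ is a prefix of a sequence $b$, then in every $q$-sequence $s$ one has $\textit{uo}(b, s) \le \textit{PEUO}(a, s)$ and $\textit{PEUO}(b, s) \le \textit{PEUO}(a, s)$. These are precisely the chains displayed in the two halves of the \textit{PEUO} argument, read before the outer sum and the division by \textit{minsup} are applied. I would also record the containment fact that, since $t$ is a prefix of $t'$, every $q$-sequence with $t' \subseteq s$ also satisfies $t \subseteq s$, so $\{s \in \mathcal{D} : t' \subseteq s\} \subseteq \{s \in \mathcal{D} : t \subseteq s\}$. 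Finally, because the parent $l$ generating $t$ is itself a prefix of $t$, the ``otherwise'' branch of the \textit{RSUO} definition vanishes exactly when $t \not\subseteq s$, so I may rewrite $\textit{RSUO}(t) = \big(\sum_{t \subseteq s,\, s \in \mathcal{D}} \textit{PEUO}(l, s)\big)/\textit{minsup}$, and likewise $\textit{RSUO}(t')$ in terms of the parent $l'$ of $t'$.

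For the first inequality $\textit{uo}(t') \le \textit{RSUO}(t)$, I would start from $\textit{uo}(t') = \big(\sum_{t' \subseteq s} \textit{uo}(t', s)\big)/\textit{sup}(t')$ and bound each summand by $\textit{uo}(t', s) \le \textit{PEUO}(l, s)$, which is legitimate because $l$ is a prefix of $t$ and hence of $t'$. Enlarging the index set from $\{s : t' \subseteq s\}$ to $\{s : t \subseteq s\}$ only adds nonnegative terms, so the numerator is at most $\sum_{t \subseteq s} \textit{PEUO}(l, s)$. Replacing the denominator $\textit{sup}(t') \ge \textit{minsup}$ by the smaller value \textit{minsup} can only increase the fraction, which yields exactly $\textit{RSUO}(t)$.

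For the second inequality $\textit{RSUO}(t') \le \textit{RSUO}(t)$, the key preliminary step is to check that the parent $l'$ of $t'$ has $l$ as a prefix. Since $l'$ and $t$ are both prefixes of $t'$ with $|t| \le |t'| - 1 = |l'|$, the two are nested with $t$ a prefix of $l'$; as $l$ is a prefix of $t$, it is a prefix of $l'$. The per-sequence monotonicity then gives $\textit{PEUO}(l', s) \le \textit{PEUO}(l, s)$ for every $s$, and the same domain inclusion lets me pass from $\sum_{t' \subseteq s} \textit{PEUO}(l', s)$ up to $\sum_{t \subseteq s} \textit{PEUO}(l, s)$, both divided by \textit{minsup}, giving the claim.

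I expect the main obstacle to be the bookkeeping around parents rather than any analytic subtlety: making ``$l$ is a prefix of $l'$'' precise when $t'$ extends $t$ by more than one item, and confirming that the single-item extension written as $i_{p'}$ in the \textit{PEUO} proof is exactly what licenses the per-sequence inequalities I invoke. Once those per-sequence estimates and the nesting of $l, t, l', t'$ are pinned down, both bounds follow by the identical pattern of bounding termwise, enlarging the index set, and shrinking the denominator to \textit{minsup}.
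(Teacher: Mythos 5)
Your proof is correct and takes essentially the same route as the paper's: both reduce the claim to the per-sequence monotonicity of \textit{PEUO} extracted from the proof of Theorem~\ref{peuo:theorem}, then bound termwise, enlarge the summation domain from $\{s : t^\prime \subseteq s\}$ to $\{s : t \subseteq s\}$, and shrink the denominator to \textit{minsup}. If anything your bookkeeping is more careful than the paper's, which in its second chain tacitly takes $t$ itself as the parent of $t^\prime$ (writing $\textit{RSUO}(t^\prime, s) = \textit{PEUO}(t, s)$) and states its final step as an equality where only an inequality holds, whereas your nesting argument ($l$ a prefix of $t$, $t$ a prefix of $l^\prime$, $l^\prime$ a prefix of $t^\prime$) handles extensions by more than one item explicitly.
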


\begin{proof}
	\rm Let $l$ be a sequence that is able to generate a sequence $t$ by performing an extension operation. According to Theorem \ref{peuo:theorem}, we can learn that:
	$$
	\begin{aligned}
	\textit{uo}(t^\prime) &\le \frac{\sum\limits_{t \subseteq s \land s \in \mathcal{D}}\textit{PEUO}(t, s)}{\textit{minsup}}
	\le \frac{\sum\limits_{l \subseteq t \land t \subseteq s \land s \in \mathcal{D}}\textit{PEUO}(l, s)}{\textit{minsup}} = \frac{\sum\limits_{s \in \mathcal{D}} \textit{RSUO}(t, s)}{\textit{minsup}} = \textit{RSUO}(t).
	\end{aligned}
	$$ 

	$$
	\begin{aligned}
	\textit{RSUO}(t^\prime) &= \frac{\sum\limits_{s \in \mathcal{D}} \textit{RSUO}(t^\prime, s)}{\textit{minsup}}
	= \frac{\sum\limits_{t \subseteq s \land t^\prime \subseteq s \land s \in \mathcal{D}}\textit{PEUO}(t, s)}{\textit{minsup}}  \le \frac{\sum\limits_{l \subseteq s \land t \subseteq s \land t^\prime \subseteq s \land s \in \mathcal{D}}\textit{PEUO}(l, s)}{\textit{minsup}} = \textit{RSUO}(t).
	\end{aligned}
	$$
\end{proof}

\begin{upper bound}[Top Prefix Utility Occupancy, \textit{TPUO}]
	\rm Because the upper bound \textit{PEUO} is insufficient, we propose \textit{TPUO} as a more effective upper bound. \textit{TPUO} is based on the \textit{PEUO}, taking into account the more likely practical utility occupancy. It only calculates the \textit{PEUO} value for the top \textit{minsup} large (we use $\downarrow$ to present the descending order for \textit{PEUO} value), keeping the overestimation of the sequence of utility occupancy not significant. In a $q$-sequence $s$, the \textit{TPUO} of a sequence $t$, denoted as \textit{TPUO}($t$, $S$), is defined as follows: 
	$$
	\begin{aligned}
	\textit{TPUO}(t, s) = \textit{PEUO}(t, s).
	\end{aligned}
	$$
	Given a predefined \textit{minsup}, the \textit{TPUO} of a sequence $t$ in a $q$-sequence database $\mathcal{D}$, denoted as \textit{TPUO}($t$), is defined as follows: 
	$$
	\begin{aligned}
	\textit{TPUO}(t) = \frac{\sum\limits_{\textit{top minsup,}\; t \subseteq s \land s \in \mathcal{D}} \textit{TPUO}(t, s) \downarrow}{\textit{minsup}}. 
	\end{aligned}
	$$
\end{upper bound}

\begin{theorem}
	\label{tpuo:theorem}
	\rm Given two sequences $t$ and $t^\prime$ such that $t$ $\in$ $\mathcal{D}$ and $t^\prime$ $\in$ $\mathcal{D}$, and $t$ is a prefix of $t^\prime$, if \textit{sup}($t$) $\ge$ \textit{minsup} and \textit{sup}($t^\prime$) $\ge$ \textit{minsup}, it holds that: $
	\textit{uo}(t^\prime) \le \textit{TPUO}(t)$; $\textit{TPUO}(t^\prime) \le \textit{TPUO}(t)$.
\end{theorem}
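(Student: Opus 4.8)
The plan is to reduce everything to two ingredients that are already available and then add one genuinely new, elementary fact about averages. The available ingredients are the two per-$q$-sequence bounds that are implicit in the proof of Theorem \ref{peuo:theorem}: for every $q$-sequence $s$ containing $t'$ (and hence $t$, since $t$ is a prefix of $t'$) one has $\textit{uo}(t', s) \le \textit{PEUO}(t, s)$ and $\textit{PEUO}(t', s) \le \textit{PEUO}(t, s)$. The new fact I would isolate first as a short lemma: since all \textit{PEUO} values are nonnegative, if the numbers $\{\textit{PEUO}(t, s) : t \subseteq s \land s \in \mathcal{D}\}$ are listed in descending order, then (i) the average of the top $m$ of them is non-increasing in $m$, and (ii) the average of any $k$ of them is at most the average of the top $k$. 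Both parts follow by standard one-line arguments (appending a value no larger than the current top-$m$ average cannot raise it; and any $k$-term sum is dominated by the $k$ largest terms).

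First I would prove $\textit{uo}(t') \le \textit{TPUO}(t)$. Invoking the per-sequence bound and summing over the $\textit{sup}(t')$ sequences that contain $t'$ gives
$$\textit{uo}(t') = \frac{\sum_{t' \subseteq s \land s \in \mathcal{D}} \textit{uo}(t', s)}{\textit{sup}(t')} \le \frac{\sum_{t' \subseteq s \land s \in \mathcal{D}} \textit{PEUO}(t, s)}{\textit{sup}(t')}.$$
The right-hand side is the average of a particular collection of $\textit{sup}(t')$ of the $\textit{PEUO}(t, \cdot)$ values. Applying part (ii) of the lemma replaces it by the average of the \emph{top} $\textit{sup}(t')$ values, and part (i) together with the hypothesis $\textit{sup}(t') \ge \textit{minsup}$ lets me pass from the top-$\textit{sup}(t')$ average up to the larger top-$\textit{minsup}$ average, which is exactly $\textit{TPUO}(t)$.

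Next I would prove $\textit{TPUO}(t') \le \textit{TPUO}(t)$ by chaining two monotonicities. Because $\textit{PEUO}(t', s) \le \textit{PEUO}(t, s)$ for each $s$ containing $t'$, the multiset $\{\textit{PEUO}(t', s)\}$ over sequences containing $t'$ is dominated term-by-term by $\{\textit{PEUO}(t, s)\}$ over the same sequences, so its top-$\textit{minsup}$ sum is no larger. And since $t' \subseteq s$ forces $t \subseteq s$, those sequences form a subset of the index set of $t$, so the top-$\textit{minsup}$ sum over the smaller set cannot exceed that over the larger one. Dividing through by $\textit{minsup}$ gives the claim.

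The hard part will be making the first inequality airtight, because $\textit{TPUO}(t)$ normalizes by the fixed threshold $\textit{minsup}$ rather than by $\textit{sup}(t')$; the passage from an average over $\textit{sup}(t')$ sequences to the top-$\textit{minsup}$ average genuinely needs both halves of the lemma (the arbitrary-subset bound and the decreasing-window bound) and the inequality $\textit{minsup} \le \textit{sup}(t') \le \textit{sup}(t)$ to guarantee the windows are well defined. The second inequality, by contrast, is essentially bookkeeping once the per-sequence domination from Theorem \ref{peuo:theorem} is recorded.
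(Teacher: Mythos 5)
Your proposal is correct and takes essentially the same approach as the paper: your averaging lemma (top-$m$ averages are non-increasing in $m$, and any $k$-subset average is dominated by the top-$k$ average) is exactly the content of the paper's \textit{avg}$_1$/\textit{avg}$_2$ splitting, and the rest of your argument uses the same two ingredients the paper uses, namely the per-sequence dominations $\textit{uo}(t', s) \le \textit{PEUO}(t, s)$ and $\textit{PEUO}(t', s) \le \textit{PEUO}(t, s)$ from Theorem \ref{peuo:theorem} together with monotonicity of top-\textit{minsup} sums under the index-set inclusion $\{s : t' \subseteq s\} \subseteq \{s : t \subseteq s\}$. The only difference is cosmetic ordering—you dominate by $\textit{PEUO}(t, \cdot)$ before averaging, whereas the paper first passes from $\textit{uo}(t')$ to the top-\textit{minsup} average of $\textit{uo}(t', \cdot)$—plus the fact that you state explicitly the subset and window-size conditions the paper leaves implicit.
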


\begin{proof}
	\rm For the sake of brevity, we use two variables, \textit{avg}$_1$ and \textit{avg}$_2$, which are defined as follows:
	$$
	\textit{avg}_1 = \frac{\sum\limits_{\textit{top minsup,}\;t^\prime \subseteq s \land s \in \mathcal{D}}\textit{uo}(t^\prime, s)\downarrow}{\textit{minsup}}
	$$
	$$
	\textit{avg}_2 = \frac{\sum\limits_{\textit{top minsup}+1 \sim \textit{sup}(t^\prime),\;t^\prime \subseteq s \land s \in \mathcal{D}}\textit{uo}(t^\prime, s)\downarrow}{\textit{sup}(t^\prime) - \textit{minsup}}
	$$
	Note that \textit{avg}$_1$ and \textit{avg}$_2$ satisfy \textit{avg}$_1$ $\ge$ \textit{avg}$_2$. According to the above-mentioned definitions and Theorem \ref{peuo:theorem}, we can learn that:
	$$
	\begin{aligned}
	\textit{uo}(t^\prime) &= \frac{\textit{avg}_1 \times \textit{minsup} + \textit{avg}_2 \times (\textit{sup}(t^\prime) - \textit{minsup})}{\textit{sup}(t^\prime)} \\
	&\le \frac{\textit{avg}_1 \times \textit{minsup} + \textit{avg}_1 \times (\textit{sup}(t^\prime) - \textit{minsup})}{\textit{sup}(t^\prime)} \\
	&= \textit{avg}_1
	= \frac{\sum\limits_{\textit{top minsup,}\;t^\prime \subseteq s \land s \in \mathcal{D}}\textit{uo}(t^\prime, s)\downarrow}{\textit{minsup}} \\
	&\le \frac{\sum\limits_{\textit{top minsup,}\;t \subseteq s \land s \in \mathcal{D}}\textit{TPUO}(t, s)\downarrow}{\textit{minsup}}
	= \textit{TPUO}(t).
	\end{aligned}
	$$
	
	$$
	\begin{aligned}
	\textit{TPUO}(t^\prime) &= \frac{\sum\limits_{\textit{top minsup,}\;t^\prime \subseteq s \land s \in \mathcal{D}}\textit{TPUO}(t^\prime, s)\downarrow}{\textit{minsup}}  \le \frac{\sum\limits_{\textit{top minsup,}\;t \subseteq s \land s \in \mathcal{D}}\textit{TPUO}(t, s)\downarrow}{\textit{minsup}} = \textit{TPUO}(t)
	\end{aligned}
	$$
\end{proof}

\begin{upper bound}[Top Sequence Utility Occupancy, \textit{TSUO}]
	\rm Similar to \textit{RSUO} designed for width pruning, we also design \textit{TSUO} on the basis of \textit{TPUO}. For a sequence $t$, let $l$ be a sequence that is able to generate $t$ by performing an extension operation. In a $q$-sequence $s$, the \textit{TSUO} of the sequence $t$ is denoted as \textit{TSUO}($t$, $s$) and defined as follows:
	$$ \textit{TSUO}(t, s) =  \left\{
	\begin{aligned}
	\textit{TPUO}(l, s)  &  & l \subseteq s \land t \subseteq s\\
	0 & & otherwise
	\end{aligned}
	\right.
	$$
	Given a predefined \textit{minsup}, the \textit{TSUO} of a sequence $t$ in a $q$-sequence database $\mathcal{D}$, denoted as \textit{TSUO}($t$), is defined as follows: 
	$$
	\begin{aligned}
	\textit{TSUO}(t) = \frac{\sum\limits_{\textit{top minsup,}\; s \in \mathcal{D}} \textit{TSUO}(t, s) \downarrow}{\textit{minsup}}. 
	\end{aligned}
	$$
\end{upper bound}

\begin{theorem}
	\label{tsuo:theorem}
	\rm Given two sequences $t$ and $t^\prime$ such that $t$ $\in$ $\mathcal{D}$ and $t^\prime$ $\in$ $\mathcal{D}$, and $t$ is a prefix of $t^\prime$, if \textit{sup}($t$) $\ge$ \textit{minsup} and \textit{sup}($t^\prime$) $\ge$ \textit{minsup}, it holds that: $
	\textit{uo}(t^\prime) \le \textit{TSUO}(t)$; $\textit{TSUO}(t^\prime) \le \textit{TSUO}(t)$.
\end{theorem}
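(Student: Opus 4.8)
The plan is to treat \textit{TSUO} as the width-pruning counterpart of \textit{TPUO} in exactly the same way that \textit{RSUO} is the counterpart of \textit{PEUO}, so the argument will mirror the proof of Theorem \ref{rsuo:theorem} with \textit{TPUO} playing the role of \textit{PEUO}, while the ``top \textit{minsup}'' selection is handled as in Theorem \ref{tpuo:theorem}. Throughout, let $l$ be the sequence that generates $t$ by a single extension, so that $l$ is a prefix of $t$ and $\textit{TSUO}(t,s) = \textit{TPUO}(l,s)$ whenever $t \subseteq s$; likewise take $t$ as the immediate predecessor of $t'$, so that $\textit{TSUO}(t',s) = \textit{TPUO}(t,s)$ whenever $t' \subseteq s$.

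I would first isolate two elementary building blocks. (i) \emph{Per-sequence monotonicity:} since $\textit{TPUO}(t,s) = \textit{PEUO}(t,s)$ and $l$ is a prefix of $t$, the per-term estimate $\textit{TPUO}(t,s) \le \textit{TPUO}(l,s)$ holds for every $s$ with $t \subseteq s$; this is the same per-position bookkeeping already used in the proof of Theorem \ref{peuo:theorem}, where the items consumed between the match of $l$ and the match of $t$ are absorbed into the remaining-utility term, giving $\textit{uo}(t,s,p)+\textit{ruo}(t,s,p) \le \textit{uo}(l,s,p_l)+\textit{ruo}(l,s,p_l)$. (ii) \emph{Top-$k$ sum monotonicity:} if $a_s \le b_s$ over a common index set, then the sum of the $\textit{minsup}$ largest $a_s$ is at most the sum of the $\textit{minsup}$ largest $b_s$; moreover enlarging the index set from $\{s : t' \subseteq s\}$ to $\{s : t \subseteq s\}$ (legitimate because $t' \subseteq s \Rightarrow t \subseteq s$ and $\textit{sup}(t') \ge \textit{minsup}$) can only increase the top-$\textit{minsup}$ sum. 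Both facts are routine and I would state them without belaboring the computation.

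For the first claim I would start from Theorem \ref{tpuo:theorem}, which already yields $\textit{uo}(t') \le \textit{TPUO}(t)$, and then chain
$$
\textit{uo}(t') \le \textit{TPUO}(t) = \frac{\sum\limits_{\textit{top minsup,}\; t \subseteq s \land s \in \mathcal{D}} \textit{TPUO}(t,s)\downarrow}{\textit{minsup}} \le \frac{\sum\limits_{\textit{top minsup,}\; t \subseteq s \land s \in \mathcal{D}} \textit{TPUO}(l,s)\downarrow}{\textit{minsup}} = \textit{TSUO}(t),
$$
where the middle inequality is (i) followed by (ii), and the last equality rewrites $\textit{TPUO}(l,s)$ as $\textit{TSUO}(t,s)$, the nonzero terms being exactly those with $t \subseteq s$ (of which there are at least $\textit{minsup}$). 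For the second claim I would expand $\textit{TSUO}(t')$ through $\textit{TSUO}(t',s) = \textit{TPUO}(t,s)$ and apply (i) and (ii) simultaneously:
$$
\textit{TSUO}(t') = \frac{\sum\limits_{\textit{top minsup,}\; t' \subseteq s \land s \in \mathcal{D}} \textit{TPUO}(t,s)\downarrow}{\textit{minsup}} \le \frac{\sum\limits_{\textit{top minsup,}\; t \subseteq s \land s \in \mathcal{D}} \textit{TPUO}(l,s)\downarrow}{\textit{minsup}} = \textit{TSUO}(t).
$$

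The step I expect to be the main obstacle is making building block (ii) fully rigorous when the ``top \textit{minsup}'' index sets realizing the two sides are chosen \emph{independently}: the selected subsets need not coincide, so one cannot simply compare term by term after the selection. The clean workaround is to fix the $\textit{minsup}$ indices that realize the smaller (left-hand) sum, observe that the matching right-hand terms are at least as large by (i), and then note that the genuine top-$\textit{minsup}$ right-hand sum dominates that particular choice of indices. This is a one-line extremal argument that I would spell out once and reuse; the index-set enlargement in the second claim rests on the same extremal principle together with $\textit{sup}(t') \ge \textit{minsup}$, so it introduces no additional difficulty.
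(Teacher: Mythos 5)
Your proposal is correct and follows essentially the same route as the paper's proof: invoke Theorem \ref{tpuo:theorem}, rewrite $\textit{TSUO}(t,s)$ and $\textit{TSUO}(t^\prime,s)$ via the per-sequence \textit{TPUO} of the generating prefix, apply per-sequence monotonicity, and enlarge the index set from $\{s : t^\prime \subseteq s\}$ to $\{s : t \subseteq s\}$. If anything, you are more careful than the paper, which silently compares independently selected top-\textit{minsup} sums (and has a notational slip writing \textit{TSUO} where \textit{TPUO} is meant), whereas you isolate and justify the extremal top-$k$ comparison explicitly.
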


\begin{proof}
	\rm Let $l$ be a sequence that is able to generate a sequence $t$ by performing an extension operation. According to Theorem \ref{tpuo:theorem}, we can learn that:
	$$
	\begin{aligned}
	\textit{uo}(t^\prime) &\le \frac{\sum\limits_{\textit{top minsup,}\;t \subseteq s \land s \in \mathcal{D}}\textit{TPUO}(t, s)\downarrow}{\textit{minsup}} \\
	&\le \frac{\sum\limits_{\textit{top minsup,}\; l \subseteq t \land t \subseteq s \land s \in \mathcal{D}}\textit{TPUO}(l, s)\downarrow}{\textit{minsup}} \\
	& = \frac{\sum\limits_{\textit{top minsup,}\; s \in \mathcal{D}} \textit{TSUO}(t, s) \downarrow}{\textit{minsup}} \\
	& = \textit{TSUO}(t)
	\end{aligned}
	$$ 

	$$
	\begin{aligned}
	\textit{TSUO}(t^\prime) &= \frac{\sum\limits_{\textit{top minsup,}\; s \in \mathcal{D}} \textit{TSUO}(t^\prime, s) \downarrow}{\textit{minsup}} \\
	&= \frac{\sum\limits_{\textit{top minsup,}\; t \subseteq s \land t^\prime \subseteq s \land s \in \mathcal{D}}\textit{TSUO}(t, s)\downarrow}{\textit{minsup}} \\
	&\le \frac{\sum\limits_{\textit{top minsup,}\; l \subseteq s \land t \subseteq s \land t^\prime \subseteq s \land s \in \mathcal{D}}\textit{TSUO}(l, s)\downarrow}{\textit{minsup}} \\
	& = \textit{TSUO}(t)
	\end{aligned}
	$$
\end{proof}

\begin{upper bound}[Prefix Extension Support, \textit{PES}]
	\rm To reduce meaningless pattern extensions during the mining process, we derive this upper bound based on \textit{PEUO}. For a sequence $t$, in a $q$-sequence $s$, it's \textit{PES} is denoted as \textit{PES}($t$, $s$) and defined as follows:
	$$ \textit{PES}(t, s) =  \left\{
	\begin{aligned}
	1 &  & \textit{PEUO}(t, s) \textgreater 0 \\
	0 & & otherwise
	\end{aligned}
	\right.
	$$
	The \textit{PES} of a sequence $t$ in a $q$-sequence database $\mathcal{D}$, denoted as \textit{PES}($t$), is defined as follows: 
	$$
	\begin{aligned}
	\textit{PES}(t) = \sum\limits_{t \subseteq s \land s \in \mathcal{D}} \textit{PES}(t, s). 
	\end{aligned}
	$$
\end{upper bound}

\begin{theorem}
	\label{pes:theorem}
	\rm	Given two sequences $t$ and $t^\prime$ such that $t$ $\in$ $\mathcal{D}$ and $t^\prime$ $\in$ $\mathcal{D}$, and $t$ is a prefix of $t^\prime$, it holds that: $	\textit{sup}(t^\prime) \le \textit{PES}(t)$; $\textit{PES}(t^\prime) \le \textit{PES}(t)$.
\end{theorem}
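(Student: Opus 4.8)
The plan is to reduce both inequalities to one structural fact: for every $q$-sequence $s \in \mathcal{D}$ with $t^\prime \subseteq s$, it holds that $\textit{PES}(t, s) = 1$. Because $\textit{PES}(\cdot, s)$ takes values only in $\{0, 1\}$ and the set $\{s : t^\prime \subseteq s\}$ is contained in $\{s : t \subseteq s\}$, once this fact is available both claims drop out by elementary manipulation of the defining sums. In contrast to Theorems~\ref{peuo:theorem}--\ref{tsuo:theorem}, no \textit{minsup} assumption is required here, since this bound rests purely on the anti-monotonicity of support.

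First I would prove the structural fact. Since $t$ is a prefix of $t^\prime$, Definition~3 gives that $t^\prime \subseteq s$ implies $t \subseteq s$, so $t$ matches $s$ at some nonempty set of extension positions $P$. Fix a match of $t^\prime$ in $s$ and let $p$ be the extension position at which its prefix $t$ terminates. Because $t^\prime$ is obtained from $t$ by at least one extension operation, the additional item(s) of $t^\prime$ lie strictly after $p$ within that match; hence the suffix beginning after $p$ is nonempty and, as all internal and external utilities are positive, carries positive utility, giving $\textit{ruo}(t, s, p) > 0$. Therefore $\textit{PEUO}(t, s, p) = \textit{uo}(t, s, p) + \textit{ruo}(t, s, p) > 0$, and taking the maximum over $P$ yields $\textit{PEUO}(t, s) > 0$, i.e. $\textit{PES}(t, s) = 1$.

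With the structural fact in hand, I would establish $\textit{sup}(t^\prime) \le \textit{PES}(t)$ by rewriting the support as a sum and substituting:
$$ \textit{sup}(t^\prime) = \sum_{t^\prime \subseteq s \land s \in \mathcal{D}} 1 = \sum_{t^\prime \subseteq s \land s \in \mathcal{D}} \textit{PES}(t, s) \le \sum_{t \subseteq s \land s \in \mathcal{D}} \textit{PES}(t, s) = \textit{PES}(t), $$
where the middle equality uses the structural fact and the inequality uses $\{s : t^\prime \subseteq s\} \subseteq \{s : t \subseteq s\}$ together with $\textit{PES}(t, s) \ge 0$. For $\textit{PES}(t^\prime) \le \textit{PES}(t)$ the argument is analogous: for each $s$ with $t^\prime \subseteq s$ we have $\textit{PES}(t^\prime, s) \le 1 = \textit{PES}(t, s)$ by the structural fact, and summing over the subset $\{s : t^\prime \subseteq s\}$ of $\{s : t \subseteq s\}$ gives the bound.

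The main obstacle is the rigorous justification of $\textit{ruo}(t, s, p) > 0$ inside the structural fact, especially for the $I$-Extension case, where the new item is appended to the last itemset of $t$ rather than opening a new itemset. Here one must check that the chosen extension position $p$ still leaves a nonempty, positive-utility suffix; I would resolve this by selecting, among all matches of $t$ consistent with the fixed match of $t^\prime$, the one whose terminating extension position is earliest, so that every item witnessing the extension of $t$ into $t^\prime$ lies in the remaining region measured by $\textit{ruo}(t, s, p)$, and then invoking positivity of utilities. This is the only place where the combinatorics of the two extension types and the precise meaning of \emph{extension position} must be pinned down; the remaining steps are purely arithmetic.
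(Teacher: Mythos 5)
Your reduction---prove a per-sequence ``structural fact,'' then compare index sets and sum---is in spirit the same argument the paper makes, just more honestly formalized: the paper's one-line proof silently equates $\textit{PES}(t, s) = 1$ with ``some extension of $t$ occurs in $s$'' and then compares the sets of supporting sequences (its ``$T \subseteq P$'' is a typo for the reverse inclusion). Your summation steps are fine, and your second inequality $\textit{PES}(t^\prime) \le \textit{PES}(t)$ is in fact salvageable without the full structural fact: it needs only $\textit{PES}(t^\prime, s) = 1 \Rightarrow \textit{PES}(t, s) = 1$, which holds because a match of $t^\prime$ ending at itemset position $p^\prime$ with $\textit{ruo}(t^\prime, s, p^\prime) \textgreater 0$ restricts to a match of $t$ ending at some $p \le p^\prime$, and since \textit{ruo} is computed over the itemset suffix $<$$s_{p+1}, \cdots, s_{l}$$>$, we get $\textit{ruo}(t, s, p) \ge \textit{ruo}(t^\prime, s, p^\prime) \textgreater 0$.

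The genuine gap is the structural fact itself, exactly at the obstacle you flagged, and your proposed repair does not close it. Under the paper's definition, $\textit{ruo}(t, s, p)$ is itemset-granular: it measures only $<$$s_{p+1}, \cdots, s_{l}$$>$ and never sees items inside itemset $s_p$ itself. Take $t$ = $<$[$a$]$>$, $t^\prime$ = $<$[$a$$b$]$>$ (an $I$-Extension), and $s$ = $<$[($a$, 1)($b$, 1)]$>$ a single-itemset $q$-sequence with positive utilities. Then $t^\prime \subseteq s$, but the only match position of $t$ is $p = 1 = l$, so $\textit{ruo}(t, s, 1) = 0$, hence $\textit{PEUO}(t, s) = 0$ and $\textit{PES}(t, s) = 0$: the structural fact fails, and choosing the ``earliest terminating position'' cannot help since no position leaves a nonempty suffix---the item witnessing the $I$-Extension sits in the same itemset as $p$, which \textit{ruo} excludes by construction. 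Indeed, with $\mathcal{D}$ consisting of this single $q$-sequence, $\textit{sup}(t^\prime) = 1 \textgreater 0 = \textit{PES}(t)$, so the first inequality of the theorem as literally stated is false; the paper's proof papers over the same hole. The statement (and your argument, essentially verbatim) becomes correct if \textit{ruo} is read at item-occurrence granularity, as in the rest utility underlying \textit{PEU} in HUS-Span, where the remaining region includes later items within the same itemset---so the correct fix is to repair the definition of \textit{ruo}, not the choice of match.
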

\begin{proof}
	\rm If $T$ is a set that contains all extended sequences of a sequence $t$, then $t^\prime$ $\in$ $T$. When \textit{PES}($t$) is not equal to 0, it indicates that a sequence can be extended from the sequence $t$. In this case, $T$ is not an empty set. Thus,
	$$
	\begin{aligned}
	\textit{sup}(t^\prime) = \textit{sup}(s, t^\prime \subseteq s \land s \in \mathcal{D})  \le \textit{sup}(s, \bigcup_{t^{\prime\prime} \in T \land t^{\prime\prime} \subseteq s \land s \in \mathcal{D}} s) = \textit{PES}(t).
	\end{aligned}
	$$

	$$
	\begin{aligned}
	\textit{PES}(t^\prime) = \textit{sup}(s,  \bigcup_{p \in P \land p \subseteq s \land s \in \mathcal{D}} s) \le \textit{sup}(s, \bigcup_{t^{\prime\prime} \in T \land t^{\prime\prime} \subseteq s \land s \in \mathcal{D}} s) = \textit{PES}(t),
	\end{aligned}
	$$
	where $T$ $\subseteq$ $P$.
\end{proof}

\begin{upper bound}[Reduced Sequence Support, \textit{RSS}]
	\rm For a sequence $t$, let $l$ be a sequence that is able to generate $t$ by performing an extension operation. In a $q$-sequence $s$, the \textit{RSS} of the sequence $t$, denoted as \textit{RSS}($t$, $s$), is defined as follows:
	$$ \textit{RSS}(t, s) =  \left\{
	\begin{aligned}
	\textit{PES}(l, s)  &  & l \subseteq s \land t \subseteq s\\
	0 & & otherwise
	\end{aligned}
	\right.
	$$
	The \textit{RSS} of a sequence $t$ in a $q$-sequence database $\mathcal{D}$, denoted as \textit{RSS}($t$), is defined as follows: 
	$$
	\begin{aligned}
	\textit{RSS}(t) = \sum\limits_{s \in \mathcal{D}} \textit{RSS}(t, s). 
	\end{aligned}
	$$
\end{upper bound}

\begin{theorem}
	\label{rss:theorem}
	\rm Given two sequences $t$ and $t^\prime$ such that $t$ $\in$ $\mathcal{D}$ and $t^\prime$ $\in$ $\mathcal{D}$, and $t$ is a prefix of $t^\prime$, it holds that: $	\textit{sup}(t^\prime) \le \textit{RSS}(t)$; $\textit{RSS}(t^\prime) \le \textit{RSS}(t)$.
\end{theorem}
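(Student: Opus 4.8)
The plan is to mirror the proof of Theorem~\ref{rsuo:theorem}, since \textit{RSS} stands to \textit{PES} exactly as \textit{RSUO} stands to \textit{PEUO}: both reduced bounds are obtained by evaluating the parent bound at the sequence $l$ that generates the current pattern through a single extension. Throughout, let $l$ be the sequence that generates $t$ by an extension operation, so that $l$ is a prefix of $t$ and $l \subseteq s$ holds automatically whenever $t \subseteq s$; note also that $t$ itself plays the role of the generator of $t^\prime$. The one structural fact I would extract first is the per-sequence monotonicity $\textit{PES}(t, s) \le \textit{PES}(l, s)$ for every $q$-sequence $s$ with $t \subseteq s$. This follows because \textit{PES} is merely the $\{0,1\}$-indicator of $\textit{PEUO} > 0$, and the inequality chain inside the proof of Theorem~\ref{peuo:theorem} already yields $\textit{PEUO}(t, s) \le \textit{PEUO}(l, s)$ per sequence; hence $\textit{PES}(t, s) = 1$ forces $\textit{PEUO}(l, s) \ge \textit{PEUO}(t, s) > 0$ and thus $\textit{PES}(l, s) = 1$.

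For the first claim, $\textit{sup}(t^\prime) \le \textit{RSS}(t)$, I would start from Theorem~\ref{pes:theorem}, which already gives $\textit{sup}(t^\prime) \le \textit{PES}(t)$, and then bound $\textit{PES}(t)$ by $\textit{RSS}(t)$. Unfolding the definitions and applying the per-sequence inequality term by term gives
$$
\textit{PES}(t) = \sum_{t \subseteq s \land s \in \mathcal{D}} \textit{PES}(t, s) \le \sum_{t \subseteq s \land s \in \mathcal{D}} \textit{PES}(l, s) = \sum_{s \in \mathcal{D}} \textit{RSS}(t, s) = \textit{RSS}(t),
$$
where the middle equality holds because $l \subseteq t$, so the constraint $l \subseteq s$ is implied by $t \subseteq s$ and the two summation ranges coincide with the region on which $\textit{RSS}(t, \cdot)$ is nonzero.

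For the second claim, $\textit{RSS}(t^\prime) \le \textit{RSS}(t)$, I would unfold $\textit{RSS}(t^\prime)$ using $t$ as the generator of $t^\prime$, apply the same per-sequence bound, and finally discard the extra containment constraint $t^\prime \subseteq s$ by nonnegativity of each summand:
$$
\textit{RSS}(t^\prime) = \sum_{t \subseteq s \land t^\prime \subseteq s \land s \in \mathcal{D}} \textit{PES}(t, s) \le \sum_{l \subseteq s \land t \subseteq s \land t^\prime \subseteq s \land s \in \mathcal{D}} \textit{PES}(l, s) \le \sum_{l \subseteq s \land t \subseteq s \land s \in \mathcal{D}} \textit{PES}(l, s) = \textit{RSS}(t).
$$

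The routine parts here are the index-set manipulations and the nonnegativity observation. I expect the main obstacle to be the per-sequence step $\textit{PES}(t, s) \le \textit{PES}(l, s)$: it is the only place where one must reach inside the argument of Theorem~\ref{peuo:theorem} for the \emph{per-sequence} \textit{PEUO} inequality rather than citing only the database-level conclusion, and one must check that passing through the $\{0,1\}$ indicator preserves the inequality in the correct direction. Once that lemma-like step is secured, both inequalities reduce to summing it over the appropriate sets and dropping superfluous constraints.
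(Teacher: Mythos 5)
Your proposal is correct and follows essentially the same route as the paper's proof: both deduce $\textit{sup}(t^\prime) \le \textit{PES}(t)$ from Theorem \ref{pes:theorem} and then pass to \textit{RSS} by bounding $\textit{PES}(t, s)$ by $\textit{PES}(l, s)$ sequence by sequence, and both prove $\textit{RSS}(t^\prime) \le \textit{RSS}(t)$ by unfolding $\textit{RSS}(t^\prime)$ with $t$ as the generator of $t^\prime$. If anything, your write-up is slightly more careful than the paper's: you isolate the per-sequence monotonicity $\textit{PES}(t, s) \le \textit{PES}(l, s)$ as an explicit lemma (the paper leaves it implicit in its summation step), and you correctly discard the extra constraint $t^\prime \subseteq s$ by nonnegativity with a ``$\le$'', where the paper writes a final equality with $\textit{RSS}(t)$ despite the mismatched index sets.
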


\begin{proof}
	\rm Let $l$ be a sequence that is able to generate a sequence $t$ by performing an extension operation. According to Theorem \ref{pes:theorem}, we can learn that:
	$$
	\begin{aligned}
	\textit{sup}(t^\prime) \le \textit{PES}(t) \le \sum\limits_{l \subseteq t \land t \in S \land S \in \mathcal{D}}\textit{PES}(l, S) = \sum\limits_{S \in \mathcal{D}} \textit{RSS}(t, S) = \textit{RSS}(t).
	\end{aligned}
	$$

	$$
	\begin{aligned}
	\textit{RSS}(t^\prime) = \sum\limits_{S \in \mathcal{D}} \textit{RSS}(t^\prime, S) = \sum\limits_{t \subseteq S \land t^\prime \subseteq S \land S \in \mathcal{D}} \textit{PES}(t, S)  \le \sum\limits_{l \subseteq S \land t \subseteq S \land t^\prime \subseteq S \land S \in \mathcal{D}} \textit{PES}(l, S) = \textit{RSS}(t).
	\end{aligned}
	$$
\end{proof}

\subsection{Pruning Strategies for HUOSPM}

In this section, to reduce the candidate patterns and their operations, we design some pruning strategies based on the proposed upper bounds. These pruning strategies are based on the maximum utility occupancy and maximum support of the extended patterns. Here, the proposed pruning strategies \ref{strategy2}, \ref{strategy3}, \ref{strategy4}, and \ref{strategy5} are based on the consideration of utility occupancy, while the pruning strategies \ref{strategy1}, \ref{strategy6}, and \ref{strategy7} are concerned with support. 

\begin{strategy}
	\label{strategy1}
	\rm The task of HUOSPM is to mine those sequential patterns with high frequency and high utility occupancy. Based on the anti-monotonicity of support, we can filter those unpromising items whose frequency is less than \textit{minsup} after scanning the database. This process is safe because a superset pattern of an infrequent pattern must also be infrequent \cite{agrawal1995mining, han2001prefixspan}.
\end{strategy}

\begin{strategy}
	\label{strategy2}
	\rm According to Theorem \ref{peuo:theorem}, if the \textit{PEUO}($t$) is less than \textit{minuo}, we can safely terminate all extension operations of $t$. The utility occupancy of the descendant pattern with $t$ as a prefix is certainly not greater than \textit{minuo}. It is clear that this strategy is a depth pruning strategy.
\end{strategy}

\begin{strategy}
	\label{strategy3}
	\rm According to Theorem \ref{rsuo:theorem}, let $l$ be a sequence that is able to generate a sequence $t$ by performing an extension operation, if the \textit{RSUO}($t$) is less than \textit{minuo}, we can safely terminate this extension operation for $l$. The utility occupancy of $t$ and its descendant patterns are almost certainly less than \textit{minuo}. Note that this strategy is a width pruning strategy.
\end{strategy}

\begin{strategy}
	\label{strategy4}
	\rm According to Theorem \ref{tpuo:theorem}, if the \textit{TPUO}($t$) is less than \textit{minuo}, we can safely terminate all extension operations of $t$. The utility occupancy of the descendant pattern with $t$ as a prefix is almost certainly less than \textit{minuo}. Note that this strategy is a depth pruning strategy.
\end{strategy}

\begin{strategy}
	\label{strategy5}
	\rm According to Theorem \ref{tsuo:theorem}, let $l$ be a sequence capable of generating a sequence $t$ by performing an extension operation. If the \textit{TSUO}($t$) is less than \textit{minuo}, we can safely terminate this extension operation for $l$. The utility occupancy of $t$ and its  descendant patterns are certainly not greater than \textit{minuo}. Note that this strategy is a width pruning strategy.
\end{strategy}

\begin{strategy}
	\label{strategy6}
	\rm According to Theorem \ref{pes:theorem}, if the \textit{PES}($t$) is less than \textit{minsup}, we can safely terminate all extension operations for $t$. The support of the descendant pattern with $t$ as a prefix is certainly not greater than \textit{minsup}. Note that this strategy is a depth pruning strategy.
\end{strategy}

\begin{strategy}
	\label{strategy7}
	\rm According to Theorem \ref{rss:theorem}, let $l$ be a sequence capable of generating a sequence $t$ by performing an extension operation. If the \textit{RSS}($t$) is less than \textit{minsup}, we can safely terminate this extension operation for $l$. The support of $t$ and its descendant patterns are certainly not greater than \textit{minsup}. Note that this strategy is a width pruning strategy.
\end{strategy}

\subsection{Proposed SUMU Algorithm}

Using the designed compact data structures, the introduced upper bounds on support and utility occupancy, and the proposed pruning strategies, we propose the SUMU algorithm for mining all interesting HUOSPs. Inspired by the idea of pattern growth, our designed algorithm also adopts two extension operations and a projected database mechanism to grow patterns gradually.

\begin{algorithm}[ht]
	\caption{Proposed SUMU algorithm}
	\label{alg:SUMU}
	\LinesNumbered
	
	\KwIn{$\mathcal{D}$: a $q$-sequence database; \textit{utable}: an external utility table of all items in $\mathcal{D}$; \textit{minsup}: the minimum support threshold; \textit{minuo}: the minimum utility occupancy threshold.} 
	\KwOut{All interesting HUOSPs.}		
	
	scan $\mathcal{D}$ to calculate the utility of all $q$-sequence records and the support of all items in the database\;
	\For{$i$ $\in$ $\mathcal{D}$}{
		\If{\textit{sup}($i$) $\ge$ \textit{minsup}}{
			\textit{FS1} $\leftarrow$ \textit{FS1} $\cup$ $<$[$i$]$>$; \quad(\textbf{pruning strategy 1})\\
		}
	}
	remove those infrequent items from the $q$-sequence database $\mathcal{D}$\;
	build the UOL-Chain and UO-Table for all frequent items in \textit{FS1}\;
	\For{$<$[$i$]$>$ $\in$ \textit{FS1}}{
		\If{\textit{uo}($<$$i$$>$) $\ge$ \textit{minuo}}{
			\textit{HUOSPs} $\leftarrow$ \textit{HUOSPs} $\cup$ $<$[$i$]$>$\;
		}
		\If{\textit{sup}($<$[$i$]$>$) $\ge$ \textit{minsup}}{
			\textbf{call} \textit{\textbf{HOUSP-Search}}($<$[$i$]$>$, \textit{UO-Table}$_{<[i]>}$)\;
		}		
	}
	\textbf{return} \textit{HUOSPs}		
\end{algorithm}

Algorithm \ref{alg:SUMU} represents the main pseudocode of the proposed SUMU algorithm. Four parameters, including a $q$-sequence database $\mathcal{D}$, an external utility-table \textit{utable}, a predefined minimum support threshold \textit{minsup}, and a predefined minimum utility occupancy \textit{minuo} are its inputs. Notice that both \textit{minsup} and \textit{minuo} satisfy greater than 0 and less than or equal to 1. The $q$-sequence database is scanned by SUMU first to calculate the utility of all $q$-sequences and the support of all items appearing in the database (Line 1). It then utilizes the pruning strategy \ref{strategy1} to find out all frequent items and save them as sequences in a list \textit{FS1} (Lines 2-6). After that, the unpromising items are removed from the origin $\mathcal{D}$ (Line 7). The UOL-Chain and UO-Table of all frequent sequences in \textit{FS1} are built (Line 9). For each sequence in \textit{FS1}, with the help of its UO-Table, the algorithm can verify whether it is output as a HUOSP and whether it recursively calls the \textbf{HUOSP-Search} function for it (Lines 9-16). After all HUOSPs have been identified, they are returned (Line 17).

\begin{algorithm}[ht]
	\small
	\caption{HOUSP-Search procedure}
	\label{alg:HOUSP-Search}
	\LinesNumbered
	\KwIn{$t$: a sequence; \textit{UO-Table}$_{t}$: the {UO-Table} of $t$.} 
	\KwOut{All interesting HUOSPs.}	
		
	\If{\textit{PES}($t$) $\textless$ \textit{minsup}}{
		\textbf{return};  \quad(\textbf{pruning strategy 6})\\
	}
	\If{\textit{PEUO}($t$) (or \textit{TPUO}($t$)) $\textless$ \textit{minuo}}{
		\textbf{return};  \quad(\textbf{pruning strategy 2 / pruning strategy 4})\\
	}
	initialize \textit{IEList} $\leftarrow$ $\varnothing$, \textit{SEList} $\leftarrow$ $\varnothing$\;
	scan the projected database of $t$ to: \\
	\quad1. find items that can be performed with $I$-Extension and update \textit{IEList}\;
	\quad2. find items that can be performed with $S$-Extension and update \textit{SEList}\;
	\If{\textit{RRS}($i$) $\textless$ \textit{minsup}}{
		remove $i$ from \textit{IEList} (or \textit{SEList});  \quad(\textbf{pruning strategy 7})\\
	}
	\If{\textit{RSUO}($i$) (or \textit{TSUO}($i$)) $\textless$ \textit{minuo}}{
		remove $i$ from \textit{IEList} (or \textit{SEList});  \quad(\textbf{pruning strategy 3 / pruning strategy 5})\\
	}
	\For{$i$ $\in$ \textit{IEList}}{
		$t^\prime$ $\leftarrow$ $<$$t$ $\oplus$ $i$$>$\;
		build the {UOL-Chain} and {UO-Table} of $t^\prime$\;
		\If{\textit{sup}($t^\prime$) $\ge$ \textit{minsup}}{
			\If{\textit{uo}($t^\prime$) $\ge$ \textit{minuo}}{
				\textit{HUOSPs} $\leftarrow$ \textit{HUOSPs} $\cup$ $t^\prime$\;
			}
			\textbf{call} \textit{\textbf{HOUSP-Search}}($t^\prime$, \textit{UO-Table}$_{t^\prime}$)\;
		}
		
	}
	\For{$i$ $\in$ \textit{SEList}}{
		$t^\prime$ $\leftarrow$ $<$$t$ $\otimes$ $i$$>$\;
		build the {UOL-Chain} and {UO-Table} of $t^\prime$\;
		\If{\textit{sup}($t^\prime$) $\ge$ \textit{minsup}}{
			\If{\textit{uo}($t^\prime$) $\ge$ \textit{minuo}}{
				\textit{HUOSPs} $\leftarrow$ \textit{HUOSPs} $\cup$ $t^\prime$\;
			}
			\textbf{call} \textit{\textbf{HOUSP-Search}}($t^\prime$, \textit{UO-Table}$_{t^\prime}$)\;
		}
	}	
\end{algorithm}

The details of \textbf{HUOSP-Search} procedure are shown in Algorithm \ref{alg:HOUSP-Search}. A prefix sequence $t$ and its UO-Table are parameters in this procedure. It first utilizes the pruning strategy \ref{strategy6} on support and the pruning strategy \ref{strategy2} (or the pruning strategy \ref{strategy4}) on utility occupancy to terminate the related operations of some sequences that are unpromising to generate a HUOSP (Lines 1-6). Subsequently, two lists (\textit{IEList} and \textit{SEList}) are initialized (Line 7). They record which items can be performed in two extension operations. Similar to pattern-growth-based algorithms and based on the UOL-Chain of $t$, the search procedure then scans the projected database of $t$ to identify those items that can be performed with the $I$-Extension or the $S$-Extension (Lines 8-10). For \textit{IEList} and \textit{SEList}, there are some hash tables associated with them that need to be constructed as well. Those hash tables record the information related to utility occupancy and support in order to perform the subsequent filtering. To reduce the following operations of unnecessary data structure construction, the pruning strategies \ref{strategy7} and \ref{strategy3} (or the pruning strategy \ref{strategy5}) are utilized to remove those unpromising items from \textit{IEList} and \textit{SEList} (Lines 11-16). After that, the sequence $t$ and the items of \textit{IEList} are performed by the $I$-Extension to generate a new sequence $t^\prime$; the UOL-Chain and UO-Table for $t^\prime$ are also constructed (Lines 18-19). Following the identification process, the involved recursive function decides whether or not to call based on the support and utility occupancy of $t^\prime$ (Lines 20-25). Related operations on the items within \textit{SEList} are similar (Lines 27-36). The pattern grows progressively as the recursive function is continuously called. In the end, all HUOSPs are mined.

\section{Experiments}  \label{sec:experiments}

We selected both real and synthetic datasets to conduct related experiments. The proposed SUMU algorithm is the first approach to mining sequential patterns with a utility occupancy measure. Thus, there is no suitable algorithm for comparison. We mainly focused on verifying the efficiency of the proposed upper bounds and pruning strategies and the effectiveness of SUMU. The code related to SUMU is programmed using the Java language and developed in Eclipse. Our extensive experiments were conducted on a bare computer, which is equipped with an i7-12700F 2.10 GHz CPU and 16 GB of RAM. The experimental details and results are shown below.

\subsection{Experimental Setup and Datasets}

Three real datasets (including Bible, FIFA, and Sign) and three synthetic datasets (including Syn10k, Syn20k, and Syn40k) were used in the experiments. The real datasets are often used in the evaluation of pattern mining algorithms and can be accessed from the website SPMF\footnote{\url{http://www.philippe-fournier-viger.com/spmf/}}. Regarding the used synthetic datasets, they can be generated by the IBM Quest Synthetic Data Generator \cite{QSD}. Each dataset has its own characteristics and can represent a specific type of data in practical applications. The characteristics of these datasets are described below.

$ \bullet $ \textit{\textbf{Bible}} contains 13,905 items and 36,369 sequences, which are transformed from the book Bible. Its average sequence length is 21.64.

$ \bullet $ \textit{\textbf{FIFA}} contains 2,990 items and 20,450 sequences  derived from the website of FIFA World Cup 98. Its average sequence length is 36.23.

$ \bullet $ \textit{\textbf{Sign}} a small but dense dataset of sign language utterances, with 267 items and 730 sequences. Its average sequence length is 27.11.

$ \bullet $ \textit{\textbf{Syn10k}} is a synthetic dataset with 10,000 sequence records. It has 7,312 distinct items, and its average sequence length is 26.97.

$ \bullet $ \textit{\textbf{Syn20k}} is a synthetic dataset with 20,000 sequence records. It has 7,442 distinct items, and its average sequence length is 26.84.

$ \bullet $ \textit{\textbf{Syn40k}} is a synthetic dataset with 40,000 sequence records. It has 7,537 distinct items, and its average sequence length is 26.84.

To better evaluate the proposed SUMU algorithm, several variants regarding SUMU have also been designed. Therefore, the experimental results can better show the capabilities of the designed upper bounds and pruning strategies. In our experiments, the proposed SUMU algorithm with upper bounds \textit{PEUO} and \textit{RSUO} is denoted as SUMU$_\textit{simple}$. It means that only Strategies \ref{strategy2} and \ref{strategy3} are used in SUMU$_\textit{simple}$. On the basis of SUMU$_\textit{simple}$, if unpromising items are filtered out (with Strategy \ref{strategy1}) before generating HUOSPs, then this variant of SUMU is denoted as SUMU$_\textit{PEUO}$. To analyze the performance of gap between \textit{PEUO} and \textit{TPUO}, between \textit{RSUO} and \textit{TSUO} in the experiments, we also designed another variant of SUMU (with Strategies \ref{strategy1}, \ref{strategy4}, and \ref{strategy5}), denoted as SUMU$_\textit{TPUO}$. In addition, on the basis of SUMU$_\textit{PEUO}$, the fourth variant, namely SUMU$_\textit{PES}$, is designed to evaluate the two upper bounds on the support measure. Those variants of SUMU are compared to comprehensively evaluate the effectiveness and efficiency of SUMU.

\subsection{Pattern Analysis}

In this section, we mainly discuss the effect of the change in the number of HUOSPs as the \textit{minsup} or \textit{minsuo} changes. The results for various \textit{minsup} and under a fixed \textit{minuo} are shown in Table \ref{patterns_minsup}. Likewise, the results for various \textit{minuo} and under a fixed \textit{minsup} are shown in Table \ref{patterns_minuo}. For each dataset, we use \textit{minsup}$_1$, \textit{minsup}$_2$ (or \textit{minuo}$_1$, \textit{minuo}$_2$), and so on to indicate that we increasingly adjust the parameter \textit{minsup} (or \textit{minuo}). For instance, in our experiments, for the Bible dataset, the six parameters on \textit{minsup} are set to 300, 400, 500, 600, 700, and 800; and the six parameters on \textit{minuo} are set to 0.01, 0.03, 0.05, 0.07, 0.09, and 0.11. The detailed parameter settings can be observed in Fig. \ref{runtime_minsup} and Fig. \ref{runtime_minuo}.

\begin{table}[H]
	\centering
	\caption{Number of patterns generated by varying \textit{minsup}}
	\label{patterns_minsup}
	\resizebox{\columnwidth}{!}{
		\begin{tabular}{c|cccccc}
			\hline \hline
			\multirow{2}{*}{\textbf{Dataset}}   & \multicolumn{6}{c}{\# \textbf{patterns}} \\ \cline{2-7} 
			&          $\textit{minsup}_{1}$         &  $\textit{minsup}_{2}$ & $\textit{minsup}_{3}$  & $\textit{minsup}_{4}$  & $\textit{minsup}_{5}$  & $\textit{minsup}_{6}$    \\ \hline
			Bible, \textit{minuo} = 0.1   & 21,442 & 11,008 & 6,527 & 4,290 & 2,993 & 2,211  \\ \hline
			FIFA, \textit{minuo} = 0.1 & 1,162 & 259 & 87 & 38 & 14 & 7    \\ \hline
			Sign, \textit{minuo} = 0.1 & 147,517 & 74,532 & 40,936 & 23,879 & 14,521 & 9,165  \\ \hline
			Syn10k, \textit{minuo} = 0.1 &   5,732,182 & 1,311,583 & 488,651 & 165,915 & 96,636 & 76,824  \\ \hline
			Syn20k, \textit{minuo} = 0.1 &   3,751,369 & 1,470,986 & 766,501 & 325,895 & 178,157 & 124,254 \\ \hline
			Syn40k, \textit{minuo} = 0.1&  7,223,421 & 5,144,928 & 3,710,872 & 2,087,673 & 1,142,202 & 770,393 \\ \hline
			\hline 
		\end{tabular}
	}
\end{table}

\begin{table}[H]
	\centering
	\caption{Number of patterns generated by varying \textit{minuo}}
	\label{patterns_minuo}
	\resizebox{\columnwidth}{!}{
		\begin{tabular}{c|cccccc}
			\hline \hline
			\multirow{2}{*}{\textbf{Dataset}}   & \multicolumn{6}{c}{\# \textbf{patterns}} \\ \cline{2-7} 
			&          $\textit{minuo}_{1}$         &  $\textit{minuo}_{2}$ & $\textit{minuo}_{3}$  & $\textit{minuo}_{4}$  & $\textit{minuo}_{5}$  & $\textit{minuo}_{6}$    \\ \hline
			{Bible}, \textit{minsup} = 500   & 11721 & 11668 & 11390 & 10433 & 8037 & 5012
			\\ \hline
			FIFA, \textit{minsup} = 4,000 & 1,093 & 870 & 499 & 212 & 69 & 20  \\ \hline
			Sign, \textit{minsup} = 70 & 40,936 & 28,375 & 18,330 & 11,087 & 6,134 & 3,136  \\ \hline
			Syn10k, \textit{minsup} = 14 & 488,651 & 435,881 & 367,058 & 287,760 & 204,788 & 130,351   \\ \hline
			Syn20k, \textit{minsup} = 24 & 766,501 & 660,716 & 513,359 & 355,981 & 217,649 & 117,495  \\ \hline
			Syn40k, \textit{minsup} = 34 & 7,223,421 & 6,737,579 & 5,831,242 & 4,550,068 & 3,092,664 & 1,777,904  \\ \hline
			\hline 
		\end{tabular}
	}
\end{table}

From Tables \ref{patterns_minsup} and \ref{patterns_minuo}, it is clear that the number of generated HUOSPs on each dataset is quite different as \textit{minsup} or \textit{minuo} is adjusted. Particularly, the number of generated HUOSPs on the synthetic datasets is higher than that on the real datasets. This is because for these synthetic datasets, each of their itemsets contains multiple items and can format more candidate patterns. Furthermore, as \textit{minsup} decreases by interval, the number of HUOSPs increases rapidly. For example, the difference between the number of patterns generated by \textit{minsup}$_1$ and the number of patterns generated under \textit{minsup}$_2$ is smaller than the difference between \textit{minsup}$_2$ and \textit{minsup}$_1$. This phenomenon is reasonable and also occurs in frequent itemset mining or sequential pattern mining. However, this phenomenon is contrary to the utility occupancy measure. The number of generated HUOSPs gradually increases as \textit{minuo} is decreased. This is because the HUOSPs generated by the algorithm SUMU do not vary much for smaller \textit{minuo} settings. In fact, such similar situations also can be found in the HUOPM algorithm \cite{gan2019huopm}. 

\subsection{Efficiency Analysis}

In this subsection, we conducted extensive experiments to evaluate the performance of the different upper bounds and pruning strategies used in SUMU. The results in terms of runtime for various \textit{minsup} and \textit{minuo} settings are shown in Fig. \ref{runtime_minsup} and Fig. \ref{runtime_minuo}. And the results in terms of candidate patterns for various \textit{minsup} and \textit{minuo} settings are shown in Tables \ref{candidates_minsup} and \ref{candidates_minuo}.

\begin{figure}[h]
	\centering
	\includegraphics[trim=0 0 0 0,clip,scale=0.32]{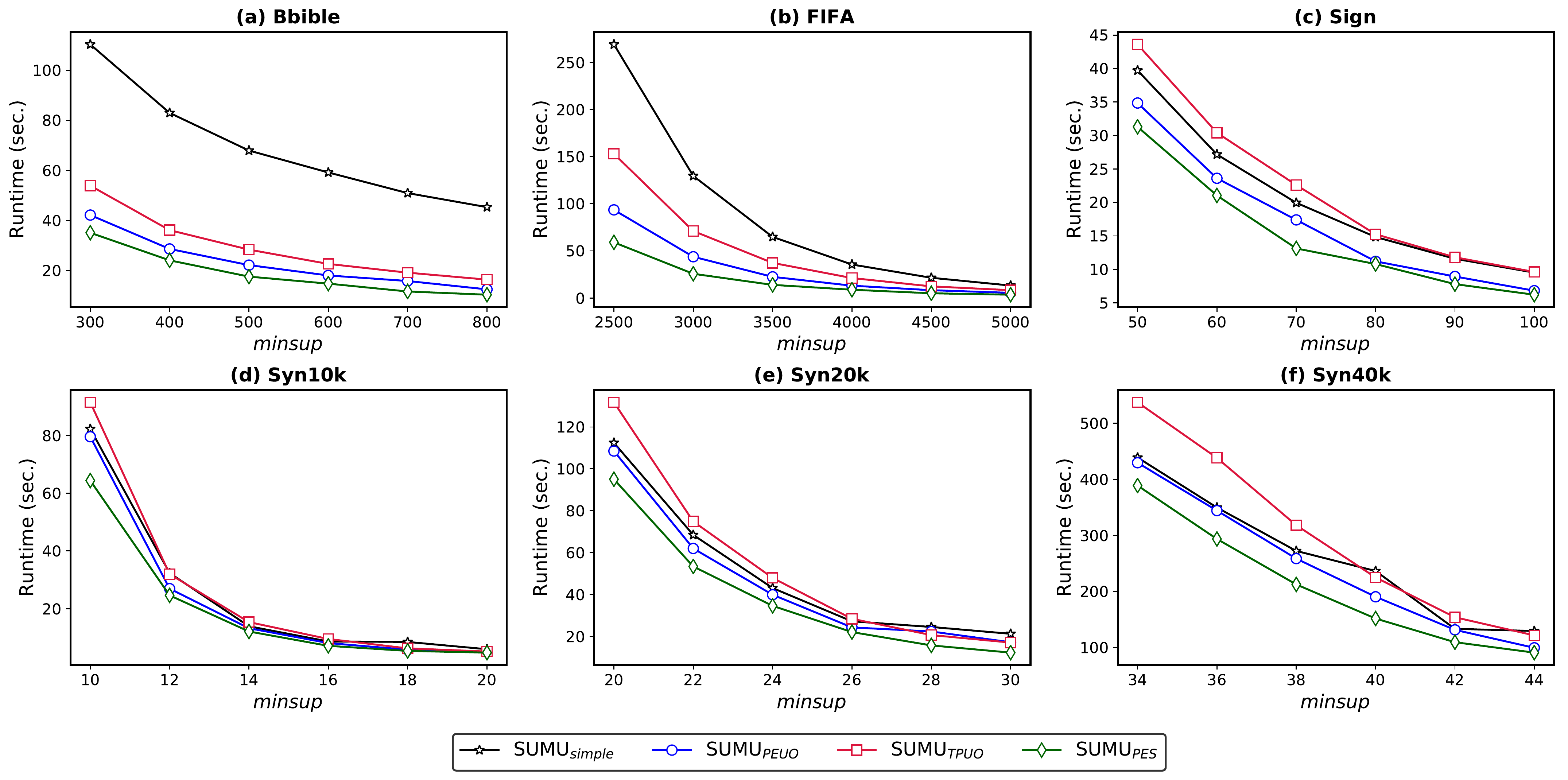}
	\caption{Running time under various \textit{minsup} and a fixed \textit{minuo} = 0.1.}
	\label{runtime_minsup}
\end{figure}

\begin{table}[h]
	\centering
	\caption{Number of candidate patterns generated by varying \textit{minsup}}
	\label{candidates_minsup}
	\resizebox{\columnwidth}{!}{
		\begin{tabular}{|c|c|c|c|c|c|c|c|}
			\hline \textbf{Dataset} & \textbf{Result} & $\textit{minsup}_{1}$         &  $\textit{minsup}_{2}$ & $\textit{minsup}_{3}$  & $\textit{minsup}_{4}$  & $\textit{minsup}_{5}$  & $\textit{minsup}_{6}$ \\
			\hline  \hline
			\multirow{4}{*}{\shortstack{Bible\\ \textit{minuo} = 0.1}} 
			& {SUMU$_\textit{simple}$} & 464,804 & 263,807 & 171,872 & 123,398 & 94,177 & 75,449 \\
			\cline{2-2}
			& {SUMU$_\textit{PEUO}$} & 321,866 & 168,372 & 103,179 & 68,755 & 48,108 & 35,632 \\
			\cline{2-2}
			& {SUMU$_\textit{TPUO}$} & 321,866 & 168,372 & 103,179 & 68,755 & 48,108 & 35,632 \\
			\cline{2-2}
			& {SUMU$_\textit{PES}$} & 35,367 & 18,999 & 11,721 & 7,967 & 5,737 & 4,354 \\
			\hline
			
			\multirow{4}{*}{\shortstack{FIFA \\ \textit{minuo} = 0.1}}
			& {SUMU$_\textit{simple}$} & 678,816 & 268,081 & 115,283 & 57,603 & 31,384 & 18,504 \\
			\cline{2-2}
			& {SUMU$_\textit{PEUO}$} & 214,531 & 80,469 & 35,373 & 17,818 & 9,930 & 5,545 \\
			\cline{2-2}
			& {SUMU$_\textit{TPUO}$} & 214,531 & 80,469 & 35,373 & 17,818 & 9,930 & 5,545 \\
			\cline{2-2}
			& {SUMU$_\textit{PES}$} & 14,710 & 5,787 & 2,399 & 1,099 & 557 & 296 \\
			\hline

			\multirow{4}{*}{\shortstack{Sign\\ \textit{minuo} = 0.1}} 
			& {SUMU$_\textit{simple}$} & 4,237,763 & 2,494,589 & 1,588,257 & 1,061,989 & 742,966 & 538,042 \\
			\cline{2-2}
			& {SUMU$_\textit{PEUO}$} & 3,668,153 & 2,131,189 & 1,284,130 & 834,756 & 553,160 & 390,477 \\
			\cline{2-2}
			& {SUMU$_\textit{TPUO}$} & 3,668,153 & 2,131,189 & 1,284,130 & 834,756 & 553,160 & 390,477 \\
			\cline{2-2}
			& {SUMU$_\textit{PES}$} & 372,610 & 208,839 & 126,752 & 81,340 & 54,695 & 38,095 \\
			\hline
			
			\multirow{4}{*}{\shortstack{Syn10k \\ \textit{minuo} = 0.1}} 
			& {SUMU$_\textit{simple}$} & 34,672,439 & 10,131,127 & 4,119,006 & 1,870,268 & 1,126,628 & 802,480 \\
			\cline{2-2}
			& {SUMU$_\textit{PEUO}$} & 32,762,145 & 9,533,071 & 3,727,340 & 1,661,839 & 974,145 & 725,941 \\
			\cline{2-2}
			& {SUMU$_\textit{TPUO}$} & 32,762,136 & 9,533,068 & 3,727,339 & 1,661,839 & 974,145 & 725,941 \\
			\cline{2-2}
			& {SUMU$_\textit{PES}$} & 5,968,170 & 1,412,210 & 537,899 & 194,708 & 115,473 & 89,559 \\
			\hline
			
			\multirow{4}{*}{\shortstack{Syn20k \\ \textit{minuo} = 0.1}} 
			& {SUMU$_\textit{simple}$} & 23,741,985 & 11,371,864 & 5,991,112 & 3,132,183 & 2,004,809 & 1,497,083 \\
			\cline{2-2}
			& {SUMU$_\textit{PEUO}$} & 23,391,275 & 11,167,493 & 5,706,748 & 2,943,750 & 1,845,769 & 1,364,875 \\
			\cline{2-2}
			& {SUMU$_\textit{TPUO}$} & 23,391,272 & 11,167,489 & 5,706,744 & 2,943,747 & 1,845,768 & 1,364,875 \\
			\cline{2-2}
			& {SUMU$_\textit{PES}$} & 3,916,112 & 1,578,521 & 841,369 & 377,140 & 215,095 & 151,514 \\
			\hline
			
			\multirow{4}{*}{\shortstack{Syn40k \\ \textit{minuo} = 0.1}} 
			& {SUMU$_\textit{simple}$} & 39,435,204 & 29,989,061 & 21,471,028 & 14,398,082 & 8,602,677 & 6,268,961 \\
			\cline{2-2}
			& {SUMU$_\textit{PEUO}$} & 38,953,081 & 29,551,866 & 21,110,384 & 14,105,707 & 8,252,704 & 5,995,403 \\
			\cline{2-2}
			& {SUMU$_\textit{TPUO}$} & 38,953,054 & 29,551,845 & 21,110,361 & 14,105,693 & 8,252,697 & 5,995,389 \\
			\cline{2-2}
			& {SUMU$_\textit{PES}$} & 7,454,317 & 5,326,206 & 3,857,950 & 2,206,183 & 1,239,766 & 850,635 \\
			\hline
			
			\hline
		\end{tabular}
	}
	
\end{table}

From Fig \ref{runtime_minsup} and Table \ref{candidates_minsup}, under different \textit{minsup} settings, we can clearly see that the runtime of the variant SUMU$_\textit{simple}$ is the worst on the datasets Bible and FIFA; the runtime of the variant SUMU$_\textit{TPUO}$ is the worst on the datasets Sign, Syn20k, and Syn40k. SUMU$_\textit{simple}$ and SUMU$_\textit{TPUO}$ perform similarly on Syn10k. However, when \textit{minsup} is set to 10, the runtime of SUMU$_\textit{TPUO}$ exceeds that of SUMU$_\textit{simple}$. In addition, the variant SUMU$_\textit{PES}$ which uses four upper bounds (\textit{PEUO}, \textit{RSUO}, \textit{PES}, and \textit{RSS}) can achieve the best performance on all datasets. And the variant SUMU$_\textit{PEUO}$ takes the second least amount of runtime. SUMU$_\textit{PES}$ is able to minimize candidate pattern generation, while SUMU$_\textit{simple}$ is the least reduced. The results of our experiment are as expected. In experiments under different \textit{minsup} and a fixed \textit{minuo}, we can draw the following conclusions.

\begin{enumerate}[label=(\arabic*)]
	\item SUMU$_\textit{PES}$ adopts a sufficient number of pruning strategies to significantly reduce candidate patterns while achieving the shortest  runtime. Compared to several other variants of SUMU, SUMU$_\textit{PES}$ generates much fewer candidate patterns. Although the number of candidate patterns is many times less than the other variants, the overall performance is not more than a few times better. This is a lot of unpromising candidate patterns that are also ignored in the subsequent program steps.
	
	\item The difference between SUMU$_\textit{simple}$ and SUMU$_\textit{PEUO}$ demonstrate that the pruning Strategy \ref{strategy1} is ineffective on synthetic datasets. This is because \textit{minsup} are set to relatively small values, and thus there are not many unpromising items appearing in the sequence dataset.
	
	\item Although \textit{TPUO} and \textit{TSUO} are tighter upper bounds, their calculation makes SUMU$_\textit{TPUO}$ take longer than SUMU$_\textit{PEUO}$. For a candidate pattern, SUMU$_\textit{PEUO}$ is able to compute upper bounds \textit{PEUO} and \textit{RSUO} in a liner time. While for \textit{TSUO}, it requires multiple sorting operations, which is a complex process. In addition, SUMU$_\textit{TPUO}$ does not reduce any candidate pattern on many datasets (including Bible, FIFA, and Sign). Even if it works on the few remaining datasets, it only reduces the number of candidate patterns by a particularly small amount.
\end{enumerate}

\begin{figure}[h]
	\centering
	\includegraphics[trim=0 0 0 0,clip,scale=0.32]{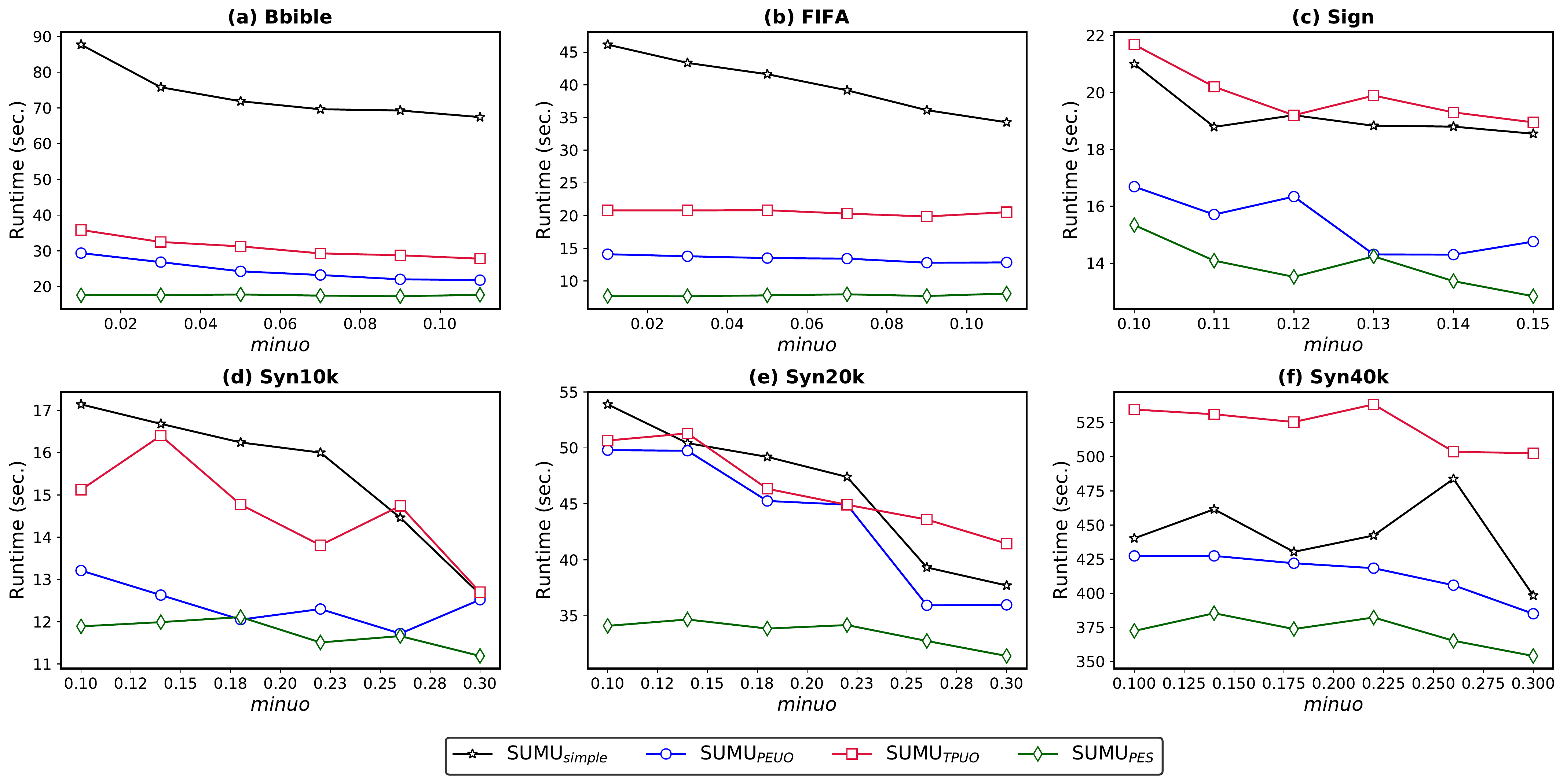}
	\caption{Running time under various \textit{minuo} and a fixed \textit{minsup}. (a) Bible, \textit{minsup} = 500. (b) FIFA,  \textit{minsup} = 4,000. (c) Sign, \textit{minsup} = 70. (d) Syn10k, \textit{minsup} = 14. (e) Syn20k, \textit{minsup} = 24. (f) Syn40k, \textit{minsup} = 34.}
	\label{runtime_minuo}
\end{figure}

\begin{table}[h]
	\centering
	\caption{Number of candidate patterns generated by varying \textit{minuo}}
	\label{candidates_minuo}
	\resizebox{\columnwidth}{!}{
		\begin{tabular}{|c|c|c|c|c|c|c|c|}
			\hline \textbf{Dataset} & \textbf{Result} & $\textit{minuo}_{1}$         &  $\textit{minuo}_{2}$ & $\textit{minuo}_{3}$  & $\textit{minuo}_{4}$  & $\textit{minuo}_{5}$  & $\textit{minuo}_{6}$ \\
			\hline  \hline
			\multirow{4}{*}{\shortstack{Bible\\ \textit{minsup} = 500}} 
			& {SUMU$_\textit{simple}$} & 1,921,104 & 672,801 & 390,245 & 264,424 & 195,445 & 153,171 \\
			\cline{2-2}
			& {SUMU$_\textit{PEUO}$} & 1,102,717 & 443,242 & 248,731 & 162,948 & 117,785 & 91,903 \\
			\cline{2-2}
			& {SUMU$_\textit{TPUO}$} & 1,102,717 & 443,242 & 248,731 & 162,948 & 117,785 & 91,903 \\
			\cline{2-2}
			& {SUMU$_\textit{PES}$} & 11,721 & 11,721 & 11,721 & 11,721 & 11,721 & 11,721 \\
			\hline
			
			\multirow{4}{*}{\shortstack{FIFA \\ \textit{minsup} = 4000}}
			& {SUMU$_\textit{simple}$} & 280,551 & 178,428 & 144,686 & 102,302 & 69,806 & 48,009 \\
			\cline{2-2}
			& {SUMU$_\textit{PEUO}$} & 38,098 & 30,487 & 25,968 & 22,128 & 19,040 & 16,766 \\
			\cline{2-2}
			& {SUMU$_\textit{TPUO}$} & 38,098 & 30,487 & 25,968 & 22,128 & 19,040 & 16,766 \\
			\cline{2-2}
			& {SUMU$_\textit{PES}$} & 1,099 & 1,099 & 1,099 & 1,099 & 1,099 & 1,099 \\
			\hline

			\multirow{4}{*}{\shortstack{Sign\\ \textit{minsup} = 70}} 
			& {SUMU$_\textit{simple}$} & 1,588,257 & 1,393,525 & 1,233,358 & 1,101,274 & 988,726 & 894,110 \\
			\cline{2-2}
			& {SUMU$_\textit{PEUO}$} & 1,284,130 & 1,129,929 & 1,002,217 & 895,976 & 806,047 & 730,904 \\
			\cline{2-2}
			& {SUMU$_\textit{TPUO}$} & 1,284,130 & 1,129,928 & 1,002,216 & 895,923 & 805,911 & 730,644 \\
			\cline{2-2}
			& {SUMU$_\textit{PES}$} & 126,752 & 126,743 & 126,716 & 126,652 & 126,478 & 126,240 \\
			\hline
			
			\multirow{4}{*}{\shortstack{Syn10k \\ \textit{minsup} = 14}} 
			& {SUMU$_\textit{simple}$} & 4,119,006 & 2,916,190 & 2,399,859 & 2,021,727 & 1,747,076 & 1,534,885 \\
			\cline{2-2}
			& {SUMU$_\textit{PEUO}$} & 3,727,340 & 2,790,561 & 2,301,336 & 1,948,286 & 1,680,334 & 1,485,466 \\
			\cline{2-2}
			& {SUMU$_\textit{TPUO}$} & 3,727,339 & 2,790,547 & 2,301,264 & 1,948,081 & 1,679,416 & 1,482,724 \\
			\cline{2-2}
			& {SUMU$_\textit{PES}$} & 537,899 & 537,774 & 537,328 & 536,225 & 533,228 & 526,147 \\
			\hline
			
			\multirow{4}{*}{\shortstack{Syn20k \\ \textit{minsup} = 24}} 
			& {SUMU$_\textit{simple}$} & 5,991,112 & 5,044,232 & 4,517,900 & 4,112,343 & 3,773,772 & 3,423,454 \\
			\cline{2-2}
			& {SUMU$_\textit{PEUO}$} & 5,706,748 & 4,904,484 & 4,430,131 & 4,041,273 & 3,712,031 & 3,357,823 \\
			\cline{2-2}
			& {SUMU$_\textit{TPUO}$} & 5,706,744 & 4,904,420 & 4,429,752 & 4,040,166 & 3,709,467 & 3,352,790 \\
			\cline{2-2}
			& {SUMU$_\textit{PES}$} & 841,369 & 840,970 & 838,452 & 830,080 & 812,793 & 785,422 \\
			\hline
			
			\multirow{4}{*}{\shortstack{Syn40k \\ \textit{minsup} = 34}} 
			& {SUMU$_\textit{simple}$} & 39,435,204 & 36,517,396 & 34,219,382 & 31,977,235 & 29,399,324 & 26,176,962 \\
			\cline{2-2}
			& {SUMU$_\textit{PEUO}$} & 38,953,081 & 36,148,568 & 33,947,656 & 31,690,858 & 29,135,445 & 25,872,139 \\
			\cline{2-2}
			& {SUMU$_\textit{TPUO}$} & 38,953,054 & 36,148,106 & 33,944,569 & 31,678,889 & 29,102,992 & 25,803,949 \\
			\cline{2-2}
			& {SUMU$_\textit{PES}$} & 7,454,317 & 7,452,659 & 7,440,512 & 7,403,338 & 7,327,730 & 7,200,566 \\
			\hline
			
			\hline
		\end{tabular}
	}
	
\end{table}

Furthermore, from Fig \ref{runtime_minuo} and Table \ref{candidates_minuo}, under different \textit{minuo} settings, we can clearly observe that SUMU$_\textit{PES}$ is the fastest variant of SUMU. On the datasets Sign, Syn10k, Syn20k, and Syn40k, there are some fluctuations that occur in all the variants of SUMU, but the overall trend is still clear. Regardless of which dataset is processed, the runtime curve of SUMU$_\textit{PES}$ becomes smoother as \textit{minuo} is adjusted. In particular, the number of candidate patterns generated on the Bible and FIFA datasets has not changed. In experiments under different \textit{minuo} and a fixed \textit{minsup}, we can draw the following conclusions.

\begin{enumerate}[label=(\arabic*)]
	\item Unlike the experiments under tuning \textit{minsup}, the runtime of each variant of SUMU is not much affected by the setting of \textit{minuo}. On the datasets Bible and FIFA, the runtimes of SUMU$_\textit{PEUO}$, SUMU$_\textit{TPUO}$, and SUMU$_\textit{PES}$ hardly increase when \textit{minuo} decreases. While the candidate patterns for SUMU$_\textit{PEUO}$ and SUMU$_\textit{TPUO}$ are increased substantially. This suggests that the support measure plays a greater role in determining the program runtime than the utility occupancy measure.
	
	\item SUMU$_\textit{PES}$ still achieves the fastest runtime due to the most reasonable pruning strategies it uses. Moreover, on each dataset, as \textit{minuo} decreases, it does not generate many more candidate patterns. Upper bounds \textit{PES} and \textit{RSS} already make it possible to reduce many invalid candidate patterns.
	
	\item Although \textit{TPUO} and \textit{TSUO} are tighter upper bounds, as \textit{minuo} decreases, they still do not reduce many irrelevant candidate patterns for SUMU$_\textit{TPUO}$. 	
\end{enumerate}

\subsection{Memory Evaluation}

The memory consumption of each variant of SUMU is close and fluctuating, and we present the approximate memory consumption under different datasets. We investigate the reasons for the disparities in memory consumption based on program design details. The experimental results regarding memory consumption are shown in Table \ref{memory}.

\begin{table}[h]
	\centering
	\caption{Memory consumption}
	\label{memory}
	\resizebox{\columnwidth}{!}{
		\begin{tabular}{c|cccccc}
			\hline \hline
			\multirow{2}{*}{}   & \multicolumn{6}{c}{\textbf{Approximate memory consumed (MB)}} \\ \cline{2-7} 
			&          Bible         &  FIFA & Sign  & Syn10k  & Syn20k  & Syn40k  \\ \hline
			{fixed \textit{minuo}}  & 1,000 $\sim$ 1,400 & 1,400 $\sim$ 1,700 & 200 $\sim$ 400 & 300 $\sim$ 800 & 600 $\sim$ 800 & 1,200 $\sim$ 1,500  \\ \hline
			{fixed \textit{minsup}}  & 1,000 $\sim$ 1,400 & 1,500 $\sim$ 1,700 & 200 $\sim$ 400 & 300 $\sim$ 600 & 300 $\sim$ 800 & 1,200 $\sim$ 1,500    \\ \hline
			\hline 
		\end{tabular}
	}
\end{table}

Since each variant of SUMU uses both UOL-Chain and UO-Table, they consume little difference in memory, and the difference is within a reasonable range. The variants of SUMU employ a different number of pruning strategies, and thus they differ somewhat in the use of some auxiliary data structures. If the pruning Strategy \ref{strategy1} is used, then unpromising items should be filtered. To find out which items are unpromising, the program utilizes a hash table to record the support for each item. On the contrary, if all items are used directly, it is sufficient for the program to use a single list to record those items that occur in the sequence database. The difference between SUMU$_{\textit{PEUO}}$ and SUMU$_{\textit{TPUO}}$ is that they use different upper bounds and pruning strategies. For the calculation of \textit{PEUO} and \textit{RSUO} of a pattern, this is relatively simple. Scanning the UOL-Chain of a pattern quickly and accumulating corresponding values. However, for the calculation of \textit{TPUO} and \textit{TSUO} of a pattern, several \textit{minsup}-sized priority queues are required. This allows computing tighter upper bound values, but also consumes additional memory space. As for SUMU$_{\textit{PES}}$, it uses more upper bounds \textit{PES} and \textit{RSS} (adopts more pruning Strategies \ref{strategy6} and \ref{strategy7}) compared to SUMU$_{\textit{PEUO}}$. This suggests that in pattern extension, the program needs the associated hash tables to decide which candidate patterns satisfy the upper bounds \textit{PES} and \textit{RSS}. It seems that the more upper bounds and pruning strategies are used, the more memory is consumed. Nevertheless, in the experiments, effective pruning strategies can avoid unnecessary UOL-Chain and UO-Table builds due to the non-generation of some candidate patterns, also saving memory consumption. Therefore, the memory consumption of each variant of SUMU is roughly equal.

\subsection{Scalability}

This section selected five synthetic datasets to evaluate the scalability of each variant of SUMU. The dataset size increases from 10k to 50k sequence records, increasing by 10k each time. We set a relative support for experiments, e.g., \textit{minsup} was set to 10, 20, 30, 40, and 50 for the five synthetic datasets, respectively. In addition, the \textit{minuo} is set to 0.1 in order to generate more HUOSPs. We analyze the scalability in terms of runtime and candidate pattern generation, and the experimental results are shown in Fig. \ref{scalability}.

\begin{figure}[h]
	\centering
	\includegraphics[trim=0 0 0 0,clip,scale=0.4]{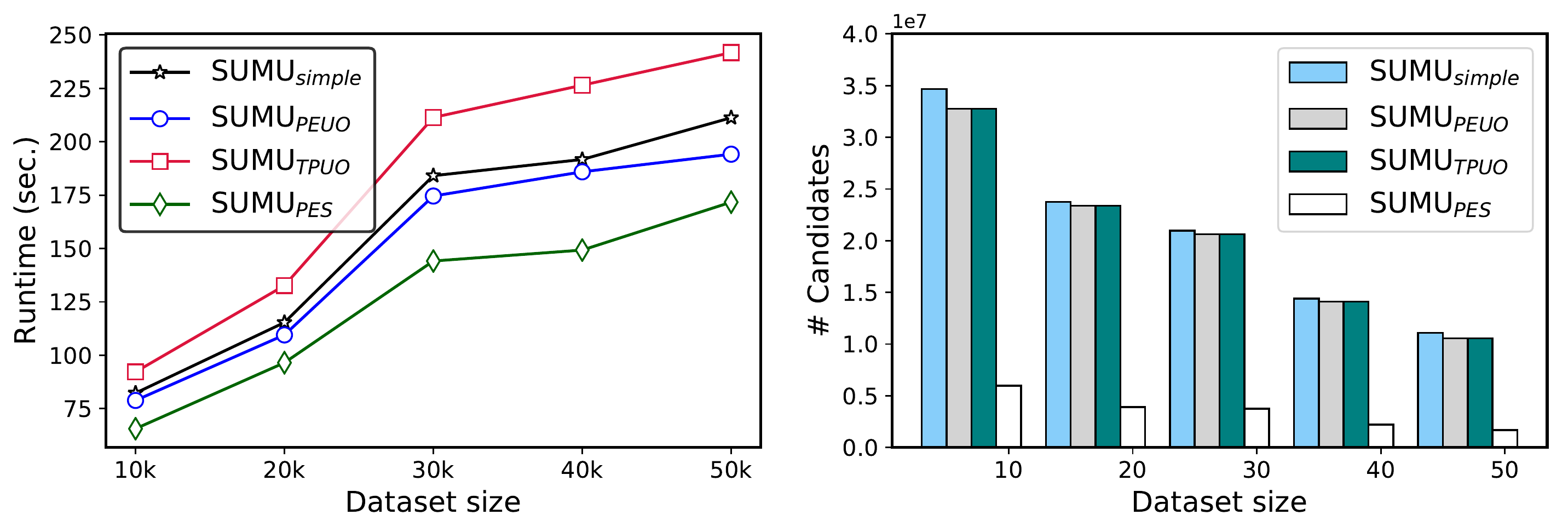}
	\caption{Scalability of the compared variants of SUMU}
	\label{scalability}
\end{figure}

From Fig. \ref{scalability}, it is clear that the runtime of each variant of SUMU grows as the size of the processed dataset increases. This is consistent with our assumption that larger datasets carry more candidate patterns, increasing the processing difficulty. The use of UOL-Chain and UO-Table makes the trend of each variant of SUMU the same, with only differences in efficiency. The difference between all SUMU variants is clear, with SUMU$_{\textit{PES}}$ performing best and SUMU$_{\textit{TPUO}}$ performing worst. For SUMU$_{\textit{PES}}$, there is no such rapid growth of candidate patterns. While other variants of SUMU generate a large number of candidate patterns. Therefore, it performs well when handling large-scale datasets. The large number of sorting operations required for the calculation of tighter upper bounds, in particular, causes SUMU$_{\textit{TPUO}}$ to perform poorly. The difference between SUMU$_{\textit{PEUO}}$ and SUMU$_{\textit{simple}}$ illustrates the effectiveness of the pruning strategy \ref{strategy1}.

\section{Conclusions and Future Work}
\label{sec:conclusion}

In this paper, to address the problem of sequence utility maximization, we formulate the problem of HUOSPM using the utility occupancy measure. Our definitions allow SPM to take into account the utility share of a pattern in sequence records and the database, thus allowing derived sequential patterns to carry more useful information. Furthermore, we proposed a novel algorithm called SUMU. In the mining process, SUMU employs two compact data structures, UOL-Chain and UO-Table, as well as associated pruning strategies. We consider the possible support and utility occupancy of a candidate sequence, and thus six upper bounds are designed. Extensive experiments on real and synthetic datasets demonstrate that SUMU can efficiently discover all interesting HUOSPs and has better scalability. Utility occupancy is a useful measure to process the sequence data. We can do more interesting research in the future. Explorations of HUOSPM can be developed in a distributed environment or under privacy protection. Moreover, some issues, such as the rare item problem and the neglect of recency, are also interesting to be studied in HUOSPM.

\section*{Acknowledgment}

This research was supported in part by the National Natural Science Foundation of China (Nos. 62272196 and 62002136), Natural Science Foundation of Guangdong Province (No. 2022A1515011861), Guangzhou Basic and Applied Basic Research Foundation (No. 202102020277), and the Young Scholar Program of Pazhou Lab (No. PZL2021KF0023).

\bibliographystyle{ACM-Reference-Format}
\bibliography{SUMU}


\end{document}